\begin{document}
\bibliographystyle{plain}
\newtheorem{lemma}{Lemma}
\newtheorem{fact}{Fact}
\newtheorem{claim}{Claim}
\newtheorem{remark}{Remark}
\newtheorem{definition}{Definition}
\newtheorem{theorem}{Theorem}
\newtheorem{proposition}{Proposition}
\newtheorem{conclusion}{Conclusion}
\newtheorem{corollary}{Corollary}
\newtheorem{example}{Example}
\newtheorem{open}{Open Problem}
\newtheorem{algorithm}{Algorithm}
\newtheorem{assumption}{Assumption}

\def\squarebox#1{\hbox to #1{\hfill\vbox to #1{\vfill}}}
\newcommand{\qed}{\hspace*{\fill}
\vbox{\hrule\hbox{\vrule\squarebox{.667em}\vrule}\hrule}\smallskip}
\newenvironment{proof}{\noindent{\bf Proof:~~}}{\(\qed\)}

\author{Ahuva Mu'alem~\footnote{Computer Science Department, 
Holon Institute of Technology (HIT), Holon, Israel. ahumu@yahoo.com.} 
\and
Michael Schapira~\footnote{School of Computer Science and Engineering, Hebrew  University, Jerusalem, Israel. schapiram@huji.ac.il.} }

\title{Setting Lower Bounds on Truthfulness}

\maketitle
\date{}
\begin{abstract}
We present a general technique for proving
inapproximability results for several paradigmatic
 truthful multi-dimensional mechanism design problems. 
In particular, we demonstrate the strength of our technique by
exhibiting a lower bound of $2-\frac{1}{n}$ for the scheduling
problem with $n$ unrelated machines (formulated as a mechanism design
problem in the seminal paper of Nisan and Ronen on Algorithmic
Mechanism Design).  Our lower bound applies to  universally-truthful randomized mechanisms, regardless of any computational assumptions
 on the running time of these mechanisms. 
Moreover, it holds even for the
wider class of truthfulness-in-expectation mechanisms.  
We then turn to Bayesian mechanism design and  show a lower bound of $1.2$ for 
  Bayesian Incentive Compatible deterministic mechanisms.  
No lower bounds for truthful  mechanisms in multi-dimensional settings 
with randomness  were previously known.

We then define  the
workload-minimization problem in networks. 
We prove our lower
bounds for this problem in the inter-domain routing setting presented by Feigenbaum, Papadimitriou, Sami, and Shenker.

Finally, we discuss several notions of non-utilitarian fairness (Max-Min fairness, Min-Max fairness, and envy minimization) and show how our technique can be used to prove lower bounds for these notions.~\footnote{The current paper supersedes an earlier version  that appeared as an extended abstract in  the \emph{Proceedings of the 18th Annual ACM-SIAM Symposium on Discrete Algorithms  (SODA-07)},  pages 1143-1152,  2007. 
The current version includes a new lower bound result  
 for Bayesian Incentive Compatible Mechanisms.}
\end{abstract}

\section{Introduction}

\subsection{Inapproximability Issues in Algorithmic Mechanism Design}
\label{intro-1}
\emph{Mechanism Design} is a field of  economic theory and game-theory  that deals
with designing protocols for optimizing global goals that require
interaction with selfish players~\cite{Mas,Rubi}. \emph{Algorithmic Mechanism Design}~\cite{NR} 
combines an economic perspective that takes into account the
strategic behaviour of the players, with a theoretical
computer-science perspective that focuses on  aspects
such as computational-efficiency and approximability. 

Let us now describe, more formally, the nature of the problems
that Algorithmic Mechanism Design attempts to solve: There is a
finite set of \emph{alternatives} $A=\{a, b, c, ...\}$, and a set of
\emph{strategic players} $N=\{1, ..., n\}$. Each player $i$ has a
\emph{valuation function} $v_i: A\rightarrow \mathbb{R}$ that is his private
information. The players are self-interested and only wish to
maximize their own utility. The global goal is expressed by a
\emph{social choice function} $f$ that assigns to every possible
$n$-tuple of players' valuations $(v_1, ..., v_n)$ an alternative
$a \in A$. Mechanisms are said to \emph{truthfully implement} a
social choice function if their outcome for every $n$-tuple of
players' valuations matches that of the social choice function, and
if they enforce payments of the different players in a way that
motivates truthful report of their valuations
 (no matter what the other players do).~\footnote{It 
is well known (e.g.,~\cite{Mas}) that, 
without loss of generality, we can limit ourselves to only
considering direct-revelation truthful mechanisms. In such
mechanisms participants are always rationally  motivated to
correctly report their private information.}

A canonical social choice function is the \emph{utilitarian}
function. A utilitarian function aims to maximize the \emph{social
welfare}, i.e., to find the alternative $a\in A$ for which the
expression $\Sigma_i \ v_i(a)$ is maximized. Another paradigmatic  
 social choice function is the \emph{Max-Min} function
(based on the philosophical work of John Rawls~\cite{Rawls}). 
For every $n$-tuple of $v_i$ valuations the Max-Min function assigns the
alternative $a \in A$ that maximizes the expression $\min_i \: v_i(a)$.
Intuitively, the Max-Min function chooses the alternative $a\in A$
in which the least satisfied player has the highest value.

While in many computational and economic settings the social
choice function we wish to implement in a truthful manner is
utilitarian, often this is not the
case. Problems in which the social choice function is
non-utilitarian include revenue maximization in auctions 
(e.g.,~\cite{FGHK02}), minimizing the makespan in scheduling 
(e.g.,~\cite{NR,AT,Azar,sched-rfpas,CK}), fair allocation of resources
(e.g.,~\cite{LMNB,BV05,LMSS}), etc. A classic result of mechanism
design states that for every utilitarian problem
there exists a mechanism that truthfully implements it -- namely,
a member of the celebrated family of \emph{VCG
mechanisms}~\cite{Vic61,Clarke,Gro73}. No general technique is
known for truthfully implementing {\it non-utilitarian} social-choice
functions. In fact, some non-utilitarian social-choice functions
cannot be truthfully implemented~\cite{LMNB,NR}. 
Hence,  it is natural to ask how well  non-utilitarian 
social choice functions can be \emph{approximated} in a truthful manner.

\subsection{Our Results}

In this paper we present and discuss a general technique
for setting lower bounds on the approximability of truthful mechanisms. 
We obtain the first lower bounds for canonical  non-utilitarian settings with randomness.
 Our technique is  powerful in the following sense:
Firstly, due to its generality and simplicity it can easily be
applied to a variety of problems and notions of truthfulness (as we shall demonstrate).
Secondly, it applies to the general case of \emph{multi-dimensional 
settings}.~\footnote{As opposed to  single-dimensional settings in which the private information
of each player essentially consists of a single numerical parameter.} Finally,
it does not impose any computational assumptions on the mechanism
(such as polynomial running-time). 

In Section~\ref{sec-lower} we present our technique and
demonstrate its use on a non-utilitarian scheduling problem. The
single-dimensional version of this scheduling problem has received
much attention~\cite{AT,Azar,sched-rfpas,CK} (and references therein).
We deal with the multi-dimensional version of the problem,
formulated as a mechanism design problem by Nisan and Ronen in
their seminal paper on Algorithmic Mechanism Design~\cite{NR}:
The global goal is minimizing the makespan of the chosen schedule. I.e., assigning the tasks to
the machines in a way that minimizes the latest finishing time. 
Obviously, the makespan-minimization social choice
function is non-utilitarian and hence \emph{might} not be
truthfully implemented by any mechanism. Nisan and Ronen prove
that not only is it impossible to minimize the makespan in a
truthful manner, but that \emph{any approximation strictly better than $2$ cannot
be achieved by a truthful deterministic mechanism}.
Since a non-truthful $(1+\epsilon)$-approximation exists~\cite{HS76} (assuming constant number of machines), this raises a natural question: 

\begin{quote}
Can near-optimal
$(1+\epsilon)$-approximation {\it  truthful} mechanisms be achieved by using {\it randomization} for multi-dimensional non-utilitarian settings?
\end{quote}

Section~\ref{sec-lower} illustrates our technique by proving
several lower bounds for this problem. In particular, we prove
that no universally-truthful randomized mechanism can achieve an approximation
ratio better than $2-\frac{1}{n}$. This nearly matches the known
truthful upper bound of $1.58606$ for the case in which there
are only two  machines~\cite{Chen}.  Surprisingly, this lower bound applies even for the substantially
weaker notion of truthfulness for randomized mechanisms -
truthfulness-in-expectation.  
We also show a lower bound of $1.2$ for  Bayesian  Incentive Compatible mechanisms 
(also known as Bayesian  truthful mechanisms).
This bound applies to deterministic  
Bayesian mechanisms.
These are  the first  lower bounds  for multi-dimensional settings with randomness. In fact,
to the best of our knowledge these are the first lower bounds for
universally truthful  mechanisms, truthful-in-expectation mechanisms and Bayesian Incentive Compatible mechanisms  in \emph{multi-dimensional}  settings in general. 

\begin{quote}
Hence,  truthful $(1+\epsilon)$-approximation using randomization is  ruled out for the canonical unrelated machines problem  (regardless of computational efficiency).
\end{quote}

In addition, we show how to prove lower bounds for the important
class of \emph{strongly-monotone} deterministic mechanisms. 
The strongly-monotone property~\cite{LMN} is essentially similar to 
Arrow's Independence of Irrelevant Alternatives 
(IIA). Lavi et al.~\cite{LMN} show that in  several canonical domains this property can be assumed 
without loss of generality. This natural property  says   that the social choice 
between  two alternatives  depends only on the individual valuation difference  
between these two alternatives.~\footnote{Together with decisiveness, strong-monotonicity essentially implies affine maximization in general combinatorial auctions domains and multi-unit domains~\cite{LMN}. In several discrete domains 
(such as unrestricted integer domains),   strong-monotonicity is  sufficient for truthful implementability, 
while weak-monotonicity is not~\cite{MS08}.
For a recent characterization of strongly-monotone scheduling mechanisms see~\cite{KV-smon-15}.} 
This is another step towards proving the long-standing  conjecture of Ronen and
Nisan that \emph{no truthful deterministic mechanism can obtain an
approximation ratio better than $n$.}

In Section~\ref{sec-applications} we present  another multi-dimensional non-utilitarian problem --
minimizing the workload in communication networks. This problem
arises naturally in the design of routing mechanisms. We study the
approximability and inapproximability of this problem in the inter-domain routing
setting presented by Feigenbaum, Papadimitriou, Sami, and
Shenker~\cite{FPSS}. 

Finally, in Section~\ref{sec-fairness} we discuss three notions of
non-utilitarian fairness -- Max-Min fairness, Min-Max fairness,
and envy-minimization. We highlight the connections between these
notions and the problems studied in this paper and prove several
general inapproximability results.

\subsection{Related Work}
In a seminal paper Nisan and Ronen~\cite{NR} introduced the field of
Algorithmic Mechanism Design. The main problem presented
in~\cite{NR} to illustrate the novelty of this new area of research
was \emph{scheduling with unrelated machines}. Nisan and Ronen
explored the approximability of this non-utilitarian multi-dimensional
problem and exhibited a lower bound of $2-\epsilon$ for truthful
deterministic mechanisms. 
For this NP-hard scheduling problem  there exist an FPTAS~\cite{HS76} (assuming constant number 
of machines) 
and a polynomial-time $2$-approximation algorithm~\cite{LST87}, that 
are both non-truthful. Additionally, this problem cannot be approximated in
polynomial-time within a factor of less than $\frac{3}{2}$~\cite{LST87}.

In recent years Algorithmic Mechanism Design has been the subject
of extensive study~\cite{AGT-Book,Tim-Book-2016}. A substantial  amount of this research has focused on single-dimensional settings 
(see e.g.,~\cite{LOS,AT02,MN02,FGHK02,KKT}). 
Nearly-optimal truthful mechanisms were designed  
for \emph{the single-dimensional problem of minimum makespan for 
scheduling tasks on related machines}~\cite{AT,Azar,sched-rfpas,CK}.
The exploration of truthful mechanisms
for multi-dimensional settings has arguably mainly revolved around the
problem of welfare maximization in Multiple-Object
auctions~\cite{CA-sur,Nisan-MU-Survey}, that has gained the status of the
paradigmatic problem of this field. As this is a utilitarian
problem, it can be optimally and truthfully implemented by a VCG
mechanism. However, it has been shown that the social welfare in
combinatorial auctions cannot be maximized (or even closely
approximated) in polynomial time~\cite{LOS,NS02}. As algorithmic
mechanism design seeks time-efficient implementations, the main
challenge faced by researchers is devising truthful
polynomial-time mechanisms that approximately maximize the social
welfare in combinatorial auctions~\cite{LS05,DNS05,DNS06,BGN,HKMT,DD13}. 

There are few {\it inapproximability} results for truthful
mechanisms. This is particularly true in multi-dimensional settings.
Other than Nisan and Ronen's $2-\epsilon$ lower bound discussed
previously, the following inapproximability results are known for combinatorial auctions: Lavi,
Mu'alem and Nisan~\cite{LMN} proved that 
no polynomial time \emph{deterministic} truthful mechanism for a multi-unit auction between two players that always allocates all units can achieve an approximation factor better than $2$.
Dobzinski and Nisan~\cite{DN10} proved
inapproximability results for polynomial-time VCG mechanisms for multi-unit auctions.  
 Dobzinski and Vondr{\'{a}}k~\cite{Dob16} bound the power of polynomial-time 
universally-truthful randomized mechanisms in combinatorial auctions with submodular valuations.
Several papers use VC dimensionality  to prove inapproximability results
for deterministic truthful mechanisms for  combinatorial auctions~\cite{MIKE15} (and references therein). 
We contribute to this ongoing research by presenting methods for
deriving the first lower bounds for multi-dimensional non-utilitarian settings  that
apply to truthful mechanisms with randomness. Our
lower bounds do not require any assumptions on the running-time of
the mechanisms. 

Our technique greatly relies on the work of Bikhchandani et
al.~\cite{LMNB}. They characterize truthfulness in multi-dimensional
settings by showing that any truthful deterministic mechanism must
maintain a certain \emph{weak-monotonicity} property. Using this
characterization,~\cite{LMNB} manages to show that while welfare
maximization can be truthfully implemented in combinatorial
auctions, one cannot truthfully implement the Max-Min social choice
function, even in a very restricted type of combinatorial auctions.
The weak-monotonicity property (and several of its extensions) will
play a major role in our inapproximability proofs.

\paragraph{Makespan in multi-parameter settings.}
Nisan and Ronen~\cite{NR} present a truthful deterministic mechanism
that obtains an $n$-approximation.  They showed that no truthful
deterministic mechanism can achieve an approximation ratio strictly better than $2$ (and also strengthened this lower bound to $n$ for two specific classes 
of deterministic mechanisms whose payments satisfy some local properties)~\cite{NR}. 
This lower bound has been improved and extended in
a series of results. Christodoulou et al.~\cite{KoutsoupiasSODA2007}
 showed a lower bound of $1+\sqrt{2}$ for $3$ machines.~\footnote{Gamzu~\cite{Gamzu} gives a considerably simpler alternative  proof.} Koutsoupias and Vidali~\cite{KoutsoupiasPHI}
showed a lower bound of $1+\phi \approx 2.618$ for truthful deterministic mechanisms 
with $n\rightarrow \infty$ 
machines ($\phi$ is the golden ratio).
An optimal lower bound of $n$ 
 for {\it anonymous} truthful mechanisms is shown in~\cite{Ashlagi}. 

For the case of two machines,
Lehmann, Nisan, and Ronen exhibit a universally-truthful randomized mechanism that
obtains an approximation of $\frac{7}{4}$~\cite{NR}. This upper bound was later improved
 to $1.6737$~\cite{LY}   and then improved  to $1.58606$~\cite{Chen}.  
Dobzinski and Sundararajan show that every mechanism in the support 
of  any universally-truthful randomized mechanism for two machines that obtains finite approximation ratio must be \emph{task independent}~\cite{Dobzin-mukund-charac-08}.  
That is, the mechanism must  assign each task separately 
from the others.~\footnote{See~\cite{Vidali2009,Vidali2011} for further  characterizations of scheduling mechanisms.}

For Bayesian settings, Daskalakis and Weinberg~\cite{Costis-2015-sched} 
recently show that there is a polynomial time $2$-approximately optimal
 mechanism for makespan minimization for 
 unrelated machines.  The approximation factor in this result is with respect to the optimal Bayesian Incentive Compatible mechanism (rather than the optimal algorithm for makespan minimization) and thus is not directly comparable with the lower bound presented in our paper.  
 
 Giannakopoulos  and Kyropoulou show that the VCG mechanism achieves an 
 approximation ratio of O$(\frac{\ln n}{\ln \ln n})$ 
 when the processing  times of the tasks are independent random variables,
identical across machines~\cite{Maria-BIC-WINE15}. This essentially improves on the previously best known bound of  O($\frac{m}{n}$) given by 
 Chawla, Hartline, Malec and Sivan~\cite{Chawla-BIC-STOC13}.

For fractional settings (where the mechanism is 
allowed to split the task across several machines) Christodoulou et al.~\cite{Fractional-sched} 
present  a truthful task-independent 
$\frac{n+1}{2}$-approximation mechanism.
To compliment this result they show a lower bound of $2-\frac{1}{n}$ for any fractional truthful mechanism. They gave a lower bound of $\frac{n+1}{2}$ 
for  task-independent fractional truthful mechanisms.   
Lu~\cite{Lu2009} gives a lower bound of $1.5625$  for scale-free universally-truthful randomized mechanism for two machines, where the allocation of tasks
depends only on relative costs, not on scale.  Lavi and Swamy~\cite{Lavi-Swamy-sched} 
and Yu~\cite{Yu09} show truthful mechanisms 
in a multi-dimensional scheduling special setting where 
the processing time of a task on each machine is either 'low' or 'high'.

In several interesting settings, truthful mechanisms are essentially 
equivalent to mechanisms that select envy-free allocations 
with the smallest supporting price vectors~\cite{Gale-Demange-1985}. 
A natural question to ask is whether envy-free pricing techniques can improve the
current striking approximability and inapproximability bounds for truthful mechanisms.
Mu'alem~\cite{Mu'alem-FAIR-by-Design} observed that the optimal envy-bounds 
are far apart from the optimal truthful bounds and therefore concludes  that envy-free 
bounding techniques  cannot be applied straightforwardly to tighten the striking randomized bounds 
for minimizing the makespan on two unrelated machines.  

 In a follow-up work, Gamzu~\cite{Gamzu} 
improved our truthful lower bound for minimizing the workload in
inter-domain routing (from $\phi =\frac{1+\sqrt{5}}{2}\approx 1.618$ to $2$) and our universally-truthful randomized 
lower bound (from $\frac{3+\sqrt{5}}{4}\approx 1.309$  
to $2$).

\subsection{Open Questions}
\begin{itemize}

\item We prove lower bounds for the scheduling
problem with unrelated machines (see Section~\ref{sec-lower}) and
for the workload-minimization problem in inter-domain routing (see
Section~\ref{sec-applications}). In both problems, there are very
large gaps between the known upper and lower bounds for truthful
mechanisms (deterministic and randomized). Narrowing these gaps is
an interesting long-standing open question.

\item This paper did not make any computational assumptions on
mechanisms. Proving (possibly stronger) lower bounds for
\emph{polynomial time} truthful mechanisms is a big open question.

\end{itemize}

\subsection{The Organization of the Paper}
In Section~\ref{sec-lower} we study several lower bounds on truthfulness of 
\emph{scheduling problem with unrelated machines}.  
In Section~\ref{sec-applications} we study  the problem of \emph{workload-minimization in networks}. 
In Section~\ref{sec-fairness} we study several
notions of non-utilitarian fairness.

\section{Preliminaries} 
\label{Section-Model}

We consider the standard  mechanism design setting: 
There are $n$ players, and a finite set of alternatives $A$.
 Each player $i \in [n]$ has a private valuation function $v_i \in V_i$ 
 that assigns a non-negative real value to every $a \in A$
(the higher the value of the alternative the more desirable it is). 

 A (deterministic) mechanism $M(f, p)$ consists of a  deterministic social choice function  
$f : V \rightarrow A$ (representing some global goal, e.g., makespan as defined below), and a payment function  $p_i : V\rightarrow R$ for each player $i$.   
Each player $i$ is simultaneously being asked to report a valuation  $v_i$ (possibly deviating from his
private valuation), and the mechanism then computes the outcome $f(v)$ and charges price $p_i(v)$ 
to player $i$ (notice that  payment might be negative or positive).  
The (quasi-linear) {\it utility} that player $i$ derives by declaring valuation  $v'_i$ is
$v_i(f(v'_i, v_{-i})) - p_i(v'_i, v_{-i})$  (assuming player $i$'s  private valuation function is $v_i$). 
Each player aims to maximize his own  utility.  
 
A mechanism $M(f, p)$ is called truthful if  for every player $i$, for every
$v_{-i} \in  V_{-i}$, and for every $v_i, v'_i  \in V_i$,

\[
v_i(f(v_i, v_{-i}))  - p_i(v_i, v_{-i}) \ge  v_i(f(v'_i , v_{-i}))  - p_i(v'_i, v_{-i}).
\]

That is, a mechanism is truthful if no player can ever improve its utility
by misreporting his private valuation to the mechanism (no matter what the other 
players do). 

\begin{remark} The above setting refers to $\mathrm{value \;  scenarios}$  where players report their  valuations to the mechanism and  willing to maximize  their values  minus their  payment  to  the mechanism.  
In  what follows we will mainly consider  $\mathrm{cost  \;  scenarios}$ where players report their costs (rather than valuations)  to the mechanism and analogously are willing to minimize  their costs minus the payment made to them by the mechanism. We will slightly abuse notation by letting $v_i$ refer 
both to a valuation function, and to a cost function, but the meaning will be clear from the context. 
\end{remark}

The approximation ratio of a mechanism optimizing 
 a global minimization goal (e.g., makespan) 
 is defined to be the worst case ratio (over  all possible valuations  $v \in V$)  between
the  goal value of the chosen alternative  and the optimal value.

A {\it universally-truthful randomized mechanism}  is a
probability distribution over truthful mechanisms. 
Formally, for every $v \in V$ the universally-truthful randomized mechanism produces a distribution $D(v)$ 
over deterministic truthful mechanisms and outputs a deterministic mechanism drawn from this distribution.

A randomized social choice function $f$ is a function from $n$-tuples of players'
valuations to probability distributions over the set of
alternatives $A$. 
A randomized mechanism  consists of a  randomized social choice function   
$f : V \rightarrow A$, and a payment function  $p_i : V\rightarrow R$ for each player $i$.
A  randomized mechanism $(f, p)$ is {\it truthful-in-expectation} if for every player $i$, for every
$v_{-i} \in  V_{-i}$, and for every $v_i, v'_i  \in V_i$,

\[
E[v_i(f(v_i, v_{-i}))  - p_i(v_i, v_{-i})] \ge  E[v_i(f(v'_i , v_{-i}))  - p_i(v'_i, v_{-i})].
\]
Thus, if a mechanism is truthful-in-expectation then the \emph{expected} utility of a player is maximized when he declares his true private valuation $v_i$ (no matter what the other players do). 
Here, we assume {\it risk neutral} players aiming  to maximize the expected difference between their  true private valuation and their total payment (the expectation is taken over
any randomness in the mechanism).

The approximation ratio of a mechanism with randomness  optimizing a global minimization goal (e.g., makespan)  is defined to be the worst case ratio (over  all possible valuations $v \in V$)  between
the {\it expected} goal value of the chosen alternative  and the optimal value. 

\vspace{0.5cm}
In a Bayesian (a.k.a. stochastic) setting  the
private valuations of each player $i$  is drawn independently from $D_i$. 
The product distribution 
$D = D_1 \times  \cdots  \times D_n$  is assumed to be public knowledge. 
We restrict our attention to 
{\it deterministic}  mechanisms (with deterministic social choice functions).  
A mechanism $M(f, p)$ is Bayesian Incentive  Compatible 
 (given the public knowledge distribution $D$) if  for all $i$
\[
E_{v_{-i} \sim D_{-i}}  \left[ v _i(f(v_i,v_{-i})) - p_i(v_i, v_{-i}) \right] \ge 
E_{v_{-i} \sim D_{-i}}  \left[ v_i(f(v'_i, v_{-i})) - p_i(v'_i, v_{-i})\right]. 
\]

That is, player $i$'s  expected utility from reporting his true
valuation $v_i$  is no less than his expected utility from reporting 
a different valuation $v'_i$ when
others' true valuations are drawn from the product distribution $D_{-i}$.
Here, we  assume {\it risk neutral} players aiming  to maximize the expected difference between their  true private valuation and their total payment (the expectation is taken over
the randomness in other players' valuations).
Thus, if a  mechanism is Bayesian Incentive  Compatible  then the \emph{expected} utility of a player is maximized when he declares his true private valuation $v_i$ (assuming all other
players truthfully report their valuations).  

The approximation ratio of a Bayesian mechanism  optimizing a global minimization goal (e.g., makespan)   is defined to be the ratio  between
the {\it expected} goal value of the chosen alternative  
and the {\it expected} optimal value, where the expectations are taken over $D$. 

\section{A Presentation of Our Technique Via the Scheduling Problem} 
\label{sec-lower}
In this section we present our technique and show how it can be used to derive lower bounds
for the scheduling problem with $n$ unrelated machines. Nisan and
Ronen~\cite{NR} exhibited a truthful $n$-approximation deterministic
mechanism for this problem. Their mechanism is a VCG
mechanism, and can easily be shown to be strongly-monotone (see
Subsection~\ref{sub-deterministic} 
for a formal definition of strong-monotonicity). 
This mechanism can be viewed as auctioning each task separately  in 
a Vickrey auction.~\footnote{Christodoulou et al. show that for two machines VCG is the unique mechanism  achieving the optimal approximation of $2$~\cite{CKV-carac-08}.} It is straight forward to show that any VCG mechanism with a \emph{deterministic} tie-breaking rule among alternatives is strongly-monotone. They also proved a lower bound of $2-\epsilon$ for
truthful deterministic mechanisms that applies even when there are
only two machines and is tight for this case. 
However, Nisan and Ronen~\cite{NR}
conjecture that their lower bound is not tight in general, and that
\emph{any truthful deterministic mechanism cannot obtain an approximation
ratio better than $n$}.

For the case of two machines,  Nisan and Ronen show that randomness
helps get an approximation ratio better than $2$. They present a 
universally-truthful randomized mechanism that has an approximation ratio of
$\frac{7}{4}$. We generalize their result by designing a 
universally-truthful randomized mechanism that obtains an approximation ratio of
$\frac{7n}{8}$ (see Appendix~\ref{upper-bound}).\emph{Thus, 
we prove that randomness achieves better
performances than the known truthful deterministic $n$ upper bound
for any number of machines}.
In Subsection~\ref{sub-deterministic} we show ways of proving lower
bounds for truthful deterministic mechanisms. Using these methods we
provide a simple and shorter proof for Nisan and Ronen's
$2-\epsilon$ lower bound. Our proof (unlike the original) relies on
exploiting the \emph{weak-monotonicity} property defined
in~\cite{LMNB}. Subsection~\ref{sub-deterministic}
also aid us in deriving a stronger lower bound for the important
classes of \emph{strongly-monotone} deterministic mechanisms. 
We note, that the mechanism in~\cite{NR}, which is the best
currently known deterministic mechanism for the scheduling problem,
is contained in this class. We prove that no approximation ratio
better than $n$ is possible for this class of strongly-monotone deterministic mechanisms ({\it thus
making another step towards proving the long-standing conjecture of~\cite{NR}}).

After discussing lower bounds for truthful deterministic mechanisms
we turn our attention  mechanisms with randomness. There are
two possible definitions for the truthfulness of a randomized
mechanism~\cite{DNS06,NR}. The first and stronger one is that of
\emph{universal truthfulness} that defines a truthful randomized
mechanism as a probability distribution over truthful deterministic
mechanisms. Thus, this definition requires that for \emph{any} toss of the random coins 
 made by the mechanism, the players
still maximize their utility by reporting their true valuations. A
considerably weaker definition of truthfulness is that of
\emph{truthfulness-in-expectation}. This definition only requires
that players maximize their \emph{expected} utility, where the
expectation is over the random choices of the mechanism (but still
for every behaviour of the other players). Unlike universally
truthful mechanisms,
 truthful-in-expectation mechanisms  only motivate {\em risk-neutral} bidders to act
truthfully. Risk-averse bidders may benefit from strategic behaviour.
In addition, truthful-in-expectation mechanisms induce
truthful behaviour only as long as players have no information about
the outcomes of the random coin flips before they need to act.

In Subsection~\ref{sub-universal} we prove the first lower bound
on the approximability of universally-truthful randomized mechanisms in
multi-dimensional settings. Namely, we show that any universally-truthful randomized mechanism for the scheduling problem cannot achieve an
approximation ratio better than $2-\frac{1}{n}$. This lower bound
nearly matches the universally truthful  $1.58606$
upper bound for the case of two machines~\cite{Chen}. To prove this lower
bound, we make use of a general technique that is based on Yao's
powerful principle~\cite{Yao}. Our proof for the $2-\epsilon$ lower bound for
deterministic mechanisms (in Subsection~\ref{sub-deterministic})
serves as a building block in the proof of this lower
bound.

In Subsection~\ref{sub-expectation} we strengthen this result by
proving that the same lower bound holds even when one is willing
to settle for truthfulness-in-expectation. Our proof relies on
some of the ideas that appear in the proof of the previous lower
bound but takes a different approach. In particular, we generalize
the weak-monotonicity requirement to fit the class of truthful
randomized mechanisms, and explore the implications of this
extended monotonicity on the probability distributions over
allocations generated by such mechanisms.

In Subsection~\ref{sub-sec-LB-Bayes}
we turn to the notion of Bayesian Incentive Compatible mechanisms, where players' valuations are drawn from a distribution that is public knowledge, and show a lower bound of $1.2$  for deterministic Bayesian mechanisms.

\subsection{The Setting}
\label{SubSection-Sched-Setting}
We consider the standard  unrelated strategic machines setting~\cite{NR}:
There are $m$ tasks that are to be assigned on $n$ machines, 
where each task must be assigned to exactly one machine. 
Every machine $i$ is a strategic player with an arbitrary 
valuation function $v_i:2^{[m]}\rightarrow \mathbb{R_{+}}$ such that
$v_i(\{j\})$ (or $v_i(j)$, for short)  denotes the private \emph{cost}  of task $j\in[m]$ on machine $i$. 

One can think of the private cost of task $j$ on machine $i$ as the
time it takes $i$ to complete $j$. For every subset of tasks  $S\subseteq [m]$,
$v_i(S)=\Sigma_{j\in S} \ v_i(j)$. 
That is, the total cost of a set of tasks on machine $i$ is the  
sum of the costs of the individual tasks on that machine. 
Each machine wishes to minimize the cost of the 
set of tasks  assigned to it minus the payment made to it by the mechanism.
If the mechanism is truthful then machine $i$ can never improve its utility
by misreporting its private cost function $v_i$ to the mechanism (no matter what the other 
machines do). 

The set of alternatives $A$ contains all  possible  allocations of tasks to the machines, where
all tasks must be assigned, and each task is assigned to exactly one machine. 
The global goal is minimizing the makespan of the chosen allocation.
 I.e., find an allocation of tasks $a \in A$ to minimize the expression 
  \[
 \max\{v_1(a_1), \ldots, v_n(a_n)\}.
  \]

\subsection{Lower Bounds for Truthful Deterministic Mechanisms}
\label{sub-deterministic}
Bikhchandani et al.~\cite{LMNB} formally 
define the weak-monotonicity property for deterministic mechanisms: Consider an
Algorithmic Mechanism Design setting with $n$ strategic players. Before we present the
formal definition we will require the following

\begin{definition}\label{def-weak-monotonicity}
Let $M$ be a  deterministic mechanism. Let $i\in [n]$ and
let $v=(v_1, ..., v_n)$ be an $n$-tuple of players' valuations. Let
$v'_i$ be a valuation function. Denote by $a$ the alternative that $M$
outputs for $v$ and by $b$ the alternative that $M$ outputs for
$(v'_i,v_{-i})$. The mechanism $M$ is said to be $\mathrm{weakly \; monotone}$ if for all such
$i$, $v$, and $v'_i$ it holds that: 
\[
v_i(a)+v'_i(b)\ge v'_i(a)+v_i(b).
\]
\end{definition}

\begin{remark}
This definition of weak-monotonicity is for value scenarios  in which each
player wishes to maximize the difference between his valuation 
and his  total payment. 
In cost scenarios in which players have costs (such as the scheduling of unrelated machines problem, the workload
minimization problem, and the min-max fairness  considered in this paper) the inequality is
in the other direction.
\end{remark}

Bikhchandani et al.~\cite{LMNB} prove that any truthful deterministic mechanism must
be weakly-monotone. For completeness, we present this simple proof.

\begin{lemma}\label{monotonicity-lemma}
Any truthful deterministic mechanism must be weakly-monotone.
\end{lemma}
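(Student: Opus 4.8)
The plan is to derive the weak monotonicity inequality directly from the incentive-compatibility constraints that define truthfulness. Recall that a truthful deterministic mechanism $M$ consists of an allocation rule together with a payment rule: for each player $i$ there is a payment function $p_i$, and when $M$ outputs alternative $a$ on input $v$, player $i$ receives utility $v_i(a) - p_i(v)$ (or $-v_i(a) - p_i(v)$ in the cost-minimization phrasing, but I will carry out the argument in the value-maximization convention of the definition and invoke the Remark for the cost version). Truthfulness means precisely that no unilateral misreport ever strictly helps: for every $i$, every $v$, and every $v'_i$, player $i$ weakly prefers the outcome obtained by reporting $v_i$ to the outcome obtained by reporting $v'_i$, holding $v_{-i}$ fixed.

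First I would fix $i \in [n]$, an $n$-tuple $v = (v_1,\dots,v_n)$, and an alternative report $v'_i$, and name the two outputs: $a = M(v)$ and $b = M(v'_i, v_{-i})$. Writing out the two relevant truthfulness constraints gives the core of the proof. At the profile $v$, player $i$ must not gain by deviating to $v'_i$:
\[
v_i(a) - p_i(v) \ \ge\ v_i(b) - p_i(v'_i, v_{-i}).
\]
Symmetrically, at the profile $(v'_i, v_{-i})$, player $i$ (whose true valuation is now $v'_i$) must not gain by deviating to $v_i$:
\[
v'_i(b) - p_i(v'_i, v_{-i}) \ \ge\ v'_i(a) - p_i(v).
\]
Now I would simply add these two inequalities. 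The payment terms $p_i(v)$ and $p_i(v'_i, v_{-i})$ appear once with each sign on both sides, so they cancel completely, leaving
\[
v_i(a) + v'_i(b) \ \ge\ v_i(b) + v'_i(a),
\]
which is exactly the weak monotonicity condition. For the cost-minimization setting relevant to scheduling, the same two steps with utilities $-v_i(a) - p_i(v)$ etc.\ yield the reversed inequality, matching the Remark.

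There is essentially no hard part here — the only thing to be careful about is making the two truthfulness constraints symmetric and being explicit that $v_{-i}$ is held fixed in both, so that the payment a player faces when reporting $v'_i$ against $v_{-i}$ is the same quantity $p_i(v'_i, v_{-i})$ in both inequalities, which is what makes the cancellation work. If one wanted to be maximally careful, one could first note that truthfulness as defined (outcomes match the social choice function plus payments motivate truthful reporting) is equivalent to the dominant-strategy incentive-compatibility inequalities used above; this is the standard reduction to direct-revelation mechanisms already mentioned in the introduction, so I would take it as given. Thus the entire proof is the two-line constraint pair followed by addition and cancellation.
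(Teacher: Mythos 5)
Your proof is correct and follows essentially the same approach as the paper: you write the two dominant-strategy incentive-compatibility inequalities (truthful report vs.\ deviation, from both $v_i$'s and $v'_i$'s perspective against the fixed $v_{-i}$) and add them so the payments cancel. The only cosmetic difference is that you treat $p_i$ as a function of the full reported profile, whereas the paper first invokes the taxation principle to write it as a function of $v_{-i}$ and the chosen alternative; this does not change the argument.
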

\begin{proof}
Let $M$ be a truthful deterministic mechanism. Let $i\in [n]$ and
let $v=(v_1, ..., v_n)$ be an $n$-tuple of players' valuations. Let
$v'_i$ be a valuation function. Denote by $a$ the alternative that $M$
outputs for $v$ and by $b$ the alternative that $M$ outputs for
$(v'_i,v_{-i})$. Consider player $i$. It is well known that the
price a player is charged by the mechanism to ensure his
truthfulness cannot depend on the player's report. Specifically, the
payment of player $i$ in $a$ and $b$ is a function of $v_{-i}$ and
of $a$ and $b$, respectively. We denote by $p_i(v_{-i},a)$ and by
$p_i(v_{-i},b)$ $i$'s payment in $a$ and $b$, respectively. It must
hold that $v_i(a)-p_i(v_{-i},a)\ge v_i(b)-p_i(v_{-i},b)$ (for
otherwise, if $i$'s valuation function is $v_i$, he would have an
incentive to declare his valuation to be $v'_i$). Similarly,
$v'_i(b)-p_i(v_{-i},b)\ge v'_i(a)-p_i(v_{-i},a)$. By adding these
two inequalities we reach the weak-monotonicity requirement.
\end{proof}

\vspace{0.1in}

Relying on the weak-monotonicity property we provide an
alternative proof for the $2-\epsilon$ lower bound of~\cite{NR}
for the scheduling problem with unrelated machines. Our proof
shows that any deterministic mechanism that achieves an
approximation ratio better than $2$ violates the weak-monotonicity
property.

\begin{theorem}\label{deterministic-bound}
Any weakly-monotone mechanism cannot achieve an approximation
ratio better than $2$.
\end{theorem}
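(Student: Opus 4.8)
The plan is to argue by contradiction, and it suffices to handle two machines: given an $m$-machine instance, adjoin $m-2$ dummy machines whose cost on every task is so large that no schedule within a factor $2$ of optimal ever puts a task on them; the restriction of a weakly-monotone $(2-\epsilon)$-approximation to the two genuine machines is still weakly monotone and still $(2-\epsilon)$-approximate. So assume $M$ is weakly monotone on two machines with makespan always at most $(2-\epsilon)\cdot\mathrm{OPT}$, for some $\epsilon>0$.

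First I would determine $M$'s output on the base profile $v$ in which each of two tasks costs $1$ on both machines. There $\mathrm{OPT}=1$ (one task per machine), while putting both tasks on one machine gives makespan $2=2\cdot\mathrm{OPT}$; hence the $(2-\epsilon)$-guarantee forces $M(v)$ to split the two tasks, and, renaming tasks, I fix $M(v)=a$ with machine $1$ receiving task $1$. Next I would pass to a second profile $v'=(v_1',v_2)$ that changes only one machine's cost vector — raising its cost on a task it holds and/or adjusting its cost on the other task — engineered so that the $(2-\epsilon)$-guarantee again leaves exactly one allocation $b$ of sufficiently small makespan, forcing $M(v')=b$; checking this uniqueness is a finite inspection of the four allocations against the bound $(2-\epsilon)\cdot\mathrm{OPT}(v')$. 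Weak monotonicity for the machine $i$ whose report changed then reads, in the cost-minimisation form of the Remark, $v_i(a)+v_i'(b)\le v_i'(a)+v_i(b)$, i.e.\ $v_i(a)-v_i'(a)\le v_i(b)-v_i'(b)$; the profile $v'$ is built precisely so that this inequality is violated. If one perturbation is not enough, I would iterate along a short chain $v=v^0,v^1,\dots,v^k$ in which consecutive profiles differ in a single machine and each $M(v^t)$ is pinned down by the approximation ratio; summing the $k$ weak-monotonicity inequalities and telescoping (using additivity of the cost functions) yields a single false numerical inequality.

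The delicate point — and the only real work — is the design of the auxiliary profile(s) so that the approximation constraint genuinely forces the allocation weak monotonicity forbids. Weak monotonicity is permissive for additive costs: when a machine's cost on a held task is raised, $M$ is always free to just drop it, and when a cost is lowered $M$ may keep or absorb it, so a single perturbation by itself leaves $M$ a ``monotone escape route''. The costs on the remaining task and on the other machine therefore have to be tuned so that at the target profile the \emph{unique} schedule of makespan below $(2-\epsilon)\cdot\mathrm{OPT}$ is exactly the one that reverses $M$'s monotone reaction; I expect finding these numbers (and verifying the uniqueness claim) to be where the argument stands or falls, everything else — the padding reduction and the forced split on the all-ones profile — being routine.
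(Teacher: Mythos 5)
Your high-level plan — reduce to two machines, pin down the allocation on a base profile via the approximation guarantee, perturb one machine's report so that the guarantee forces a new allocation, and show the two forced allocations violate weak monotonicity — is exactly the structure of the paper's proof. You have also correctly identified where the entire argument lives: in the choice of profiles. Unfortunately, that is precisely the part you have left undone, and the specific starting point you chose does not lead anywhere.

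With two machines and only the two tasks of your base profile (each costing $1$ on both machines), no single perturbation of one machine's report can simultaneously be forced by the approximation bound and violate weak monotonicity. Write $a=(\{1\},\{2\})$ for the split, and check the four candidate outcomes $b$ after changing machine $1$'s valuation to $v_1'$: the weak-monotonicity inequality $v_1(a_1)+v_1'(b_1)\le v_1'(a_1)+v_1(b_1)$ is violated only when $v_1'(2)>v_1'(1)$ (if $b_1=\{2\}$), or $v_1'(2)>1$ (if $b_1=\{1,2\}$), or $v_1'(1)<1$ (if $b_1=\emptyset$), and in each case the cost structure that would make the violation possible is exactly the cost structure that makes $b$ \emph{not} uniquely optimal — the "monotone escape route'' you warn about is always available. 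Your fallback of chaining several single-machine perturbations and telescoping does not obviously rescue this: after the first step the allocation is no longer pinned down to a single schedule by a near-$2$ guarantee, so the intermediate allocations in the chain are not under your control and the telescoping sum is not a fixed false inequality.

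The missing idea is a \emph{third} task. The paper's construction uses two machines and three tasks: task $i$ is cheap only on machine $i$ (so the approximation guarantee forces task $i$ onto machine $i$ in every relevant profile), and task $3$ is a ``swing'' task that is cheap on both. In the base profile $v_i(i)=v_i(3)=1$, $v_i(j)=100$ otherwise, the mechanism must split tasks $1,2$ and may send task $3$ to, say, machine $2$; in the perturbed profile $v_1'(1)=0$, $v_1'(3)=1+\epsilon$, $v_1'(2)=100$, the unique schedule with makespan below $2$ is the one giving machine $1$ tasks $\{1,3\}$. Weak monotonicity for machine $1$ then reads $v_1(\{1\})+v_1'(\{1,3\}) \le v_1'(\{1\})+v_1(\{1,3\})$, i.e.\ $2+\epsilon\le 2$, which is false. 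The swing task is what gives you a nontrivial allocation change that is simultaneously forced by the approximation bound and forbidden by monotonicity; with only two tasks there is no room for it, which is why your "finite inspection'' stalls. You should either introduce the third task, or find a genuinely different mechanism for forcing, before the argument can close.
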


\begin{proof}
Let  $\epsilon$ be an arbitrarily small positive real number. 
Consider the scheduling problem with two machines
and three tasks. For every machine $i=1, 2$ we define two possible
valuation functions $v_i$ and $v'_i$:

\[
v_i(t)=\left\{%
\begin{array}{ll}
1 & t=i\ or\ t=3\\
100 & \mbox{otherwise} \\
\end{array}%
\right.
\]

\[
v'_i(t)=\left\{%
\begin{array}{ll}
0 & t=i \\
1+\epsilon & t=3 \\
100 & \mbox{otherwise.} \\
\end{array}%
\right.
\]

Let $M$ be a deterministic, weakly-monotone, mechanism that
achieves an approximation ratio better than $2$. Then, when
players $1$ and $2$ have the valuations $v_1$ and $v_2$
respectively, $M$ must assign task $1$ to player $1$, task $2$ to
player $2$, and can choose to which player to assign task $3$
(because the optimal makespan is $2$ and any other assignment
results in a makespan of at least $100$). W.l.o.g. assume that $M$
assigns task $3$ to player $2$. Now, consider the instance with
players' valuations $(v'_1, v_2)$. Notice that the only
task-allocation that guarantees an approximation ratio better than
$2$ is assigning tasks $1$ and $2$ to players $1$ and $2$
respectively, and task $3$ to player $1$. However, this turns out
to be a violation of the weak-monotonicity requirement. 
Weak-monotonicity, in this case, dictates that for every player $i=1, 2$
it must hold $v_i(a)+v'_i(b)\le v'_i(a)+v_i(b)$. However, if we
look at player $1$ we find that
$1+(1+\epsilon)=v_1(1)+v'_1(\{1, 2\})>v'_1(1)+v_1(\{1, 2\})=2$. A
contradiction.
\end{proof}

\vspace{0.1in}

Lavi et al.~\cite{LMN} present and study another property -- strong-monotonicity. 
They show that in some canonical domains this property can be assumed 
without loss of generality.  Strong-monotonicity is the strict version of weak-monotonicity (Definition \ref{def-weak-monotonicity}). It says essentially  that the social choice between  two alternatives  depend only on the individual valuation difference  
between these two alternatives.

\begin{definition}
Let $M$ be a deterministic mechanism. Let $i\in [n]$ and let
$v=(v_1, ..., v_n)$ be an $n$-tuple of players' valuations. Let
$v'_i$ be a valuation function. Denote by $a$ the alternative that $M$
outputs for $v$ and by $b$ the alternative that $M$ outputs for
$(v'_i,v_{-i})$. The mechanism $M$ is said to be $\mathrm{strongly \; monotone}$ if for all
such $i$, $v$, and $v'_i$ it holds that: If $a\neq b$, then
\[
v_i(a)+v'_i(b)>v'_i(a)+v_i(b).
\]
\end{definition}

Here, too,  the inequality is
in the other direction if players have costs rather than values. 
We prove that no member of the class of strongly-monotone mechanisms
can obtain an approximation better than $n$ for the scheduling
problem (even for the case of zero/one valuations). 
The idea at the heart of our proof of
Theorem~\ref{SMON-bound} is an iterative use of 
the strong-monotonicity property to construct an instance of the problem for
which the allocation generated by the mechanisms is very far
from optimal.

\begin{theorem}\label{SMON-bound}
Any strongly-monotone mechanism 
cannot obtain an approximation ratio better than $n$.
\end{theorem}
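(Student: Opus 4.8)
The plan is to run strong monotonicity the way weak monotonicity was run in the proof of Theorem~\ref{deterministic-bound}, but to iterate it so that each round forces one more unit of load onto a single distinguished machine. First I would rewrite the strong‑monotonicity condition in the scheduling language (with the inequality reversed, as in the Remark): writing $a_i\subseteq[n]$ for the bundle machine $i$ receives under allocation $a$, if a strongly‑monotone $M$ outputs $a$ on $v$ and $b$ on $(v'_i,v_{-i})$ with $a\neq b$, then $v_i(a_i)-v_i(b_i)<v'_i(a_i)-v'_i(b_i)$. From this I would extract the elementary ``one‑task'' consequences that drive the construction. If machine $i$ changes \emph{only} its reported cost of a single task $t$, then: (a) if it \emph{lowers} $v_i(t)$ and the allocation changes at all, then $t\notin a_i$ and $t\in b_i$ (a lowered cost can only make $i$ \emph{acquire} $t$); (b) if it \emph{raises} $v_i(t)$ and the allocation changes at all, then $t\in a_i$ and $t\notin b_i$ (a raised cost can only make $i$ \emph{shed} $t$). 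In particular raising $v_i(t)$ for a task $t\notin a_i$ cannot change the allocation at all. (Note also that, because the alternatives here are full allocations, strong monotonicity already forces the allocation to be unchanged whenever machine $i$'s bundle is unchanged.)

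Next I would fix an arbitrary strongly‑monotone mechanism $M$ with claimed ratio $<m$ and build, for $m$ tasks and $m$ machines, a finite sequence of $0/1$ profiles $v^{(0)},v^{(1)},\dots,v^{(m)}$, adjacent profiles differing in one machine's report, starting from a profile on which the approximation guarantee alone pins $M$'s output. Take $v^{(0)}$ to be: machine $1$ has cost $0$ on every task, every other machine has cost $1$ on every task. Then the unique makespan‑$0$ allocation puts all $m$ tasks on machine $1$, so an $m$‑approximate $M$ must output exactly that. Now obtain $v^{(k)}$ from $v^{(k-1)}$ by raising machine $1$'s reported cost of task $k$ from $0$ to $1$. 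By consequence~(b), at each step either $M(v^{(k)})$ again puts all $m$ tasks on machine $1$ — whose load has just grown by one — or machine $1$ sheds the task whose cost was just raised. If the first branch is taken at every step, then after $m$ steps $v^{(m)}$ is the all‑ones profile, $M$ still puts all $m$ tasks on machine $1$ with makespan $m$, while the optimum is $1$ (round‑robin over the $m$ machines, all costs $0/1$); this contradicts the ratio being $<m$. The optimum equals $1$ at every $v^{(k)}$ with $k\ge1$ for the same round‑robin reason, so no intermediate step is wasted, and the argument is just Theorem~\ref{deterministic-bound}'s single violation, repeated $m-1$ times.

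The step I expect to be the main obstacle is closing off the shedding branch of consequence~(b): when machine $1$ sheds the just‑raised task $t$, that task lands on some other machine at cost $1$, which need not by itself violate the $m$‑approximation bound, so the plain sequence above does not finish the proof. Making it go through requires choosing the next modification \emph{adaptively} from $M$'s current output — either reloading machine $1$ via consequence~(a), or transferring attention to whichever machine now carries the overflow and repeating the squeeze there, so that at the end of $\Theta(m)$ rounds \emph{some} machine carries $m$ unit‑cost tasks — and, if needed, padding the instance with a few ``anchor'' tasks that are free on the distinguished machine and prohibitively costly elsewhere (so an $m$‑approximate mechanism is forced to keep them there) in order to turn each shedding move into either an approximation violation or a further monotonicity‑preserving modification. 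The real work is the bookkeeping: verifying across all rounds that the optimum never exceeds a constant and that the hypotheses of consequences~(a)/(b) remain satisfied at each round. That bookkeeping, rather than any single inequality, is the hard part, and it is exactly what the restriction to $0/1$ valuations is there to keep manageable.
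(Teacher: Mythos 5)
Your proposal has a genuine gap — one you yourself flag — and it stems from choosing the wrong direction of cost perturbation. You \emph{raise} machine~$1$'s costs one task at a time, which (as your consequence~(b) correctly notes) permits the mechanism to shed the just‑raised task to some other machine without violating strong monotonicity. You then have no control over where that task lands or what else changes, so the iteration does not close. The sketch of an adaptive fix with ``anchor'' tasks is not worked out, and the hard bookkeeping you identify is precisely the part that is missing. As written, this is a correct reduction to an unsolved subproblem, not a proof.

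The paper's proof goes in the opposite direction and is much cleaner because of it. It uses $m^2$ unit‑cost tasks on $m$ machines, runs the mechanism once on the all‑ones instance, and lets pigeonhole hand it a machine $r$ with $|S_r|\ge m$. It then \emph{lowers} each machine $i$'s cost to $0$ on exactly the bundle $S_i$ the mechanism already gave it (and finally lowers machine $r$'s cost to $0$ on $S_r\setminus R$ for an $m$‑element $R\subseteq S_r$). The crucial point is that lowering costs on your own assigned bundle, under strong monotonicity, \emph{cannot change the allocation at all}: if the allocation changed, the strict SM inequality would force $|S_i\setminus T_i|<0$, a contradiction; hence $S_i=T_i$, which makes the SM comparison an equality, and strict monotonicity then forces $S=T$ globally. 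There is no branching, no adaptivity, no padding — each modification freezes the mechanism's output, and after $m$ such freezes machine $r$ is stuck with $m$ unit‑cost tasks while the optimum drops to~$1$. Your consequence~(a) is the one‑task shadow of this, but you never deploy it; you built your sequence out of consequence~(b) (raising), which is exactly the move that lets the adversary off the hook. If you want to salvage your line, the thing to change is the direction: start from a symmetric instance where pigeonhole, not the approximation guarantee, pins down a loaded machine, and then lower — never raise — the perturbed machine's costs on its own bundle.
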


\begin{proof}
Consider an instance of the scheduling problem with $n$ machines
and $m=n^2$ tasks. Let $M$ be a deterministic mechanism for which
the strong-monotonicity property holds. Let $\mathcal{I}$ be the instance of
the scheduling problem in which every machine $i$ has a valuation
function $v_i$ such that $v_i(j)=1$ for all $j\in[m]$. Denote by
$S=(S_1, ..., S_n)$ the allocation of tasks produced by $M$ for the 
instance $\mathcal{I}$. It must be that there is some machine $r$ such that
$|S_r|\ge n$. Without loss of generality let $r=n$.

We will now create a new instance $\mathcal{I}^1$ by altering the valuation
function of machine $1$ to $v'_1$ while leaving all the other
valuation functions unchanged (in case $S_1=\emptyset$ we skip
this part). That is, machine $1$ will have the valuation function
$v'_1$:

\[
v'_1(t)=\left\{%
\begin{array}{ll}
0 & t\in S_1 \\
1 & t\notin S_1 \\
\end{array}%
\right.
\]
and every other machine $i\neq 1$ will have a valuation function $v_i$. 
Denote by $T=(T_1, ..., T_n)$ the allocation $M$
generates for $\mathcal{I}^1$. The first step of the proof is showing that
$S_1=T_1$. This is guaranteed by the strong-monotonicity of $M$.
Assume, by contradiction that $S_1\neq T_1$. The strong-monotonicity
property ensures that $v_1(S_1)+v'_1(T_1) <  v_1(T_1)+v'_1(S_1)$.
By assigning values we have:
\[
|S_1|+|T_1\setminus S_1|< |T_1| + 0. 
\]
Observe, that $|S_1|+|T_1\setminus S_1|-|T_1|=|S_1\setminus T_1|$,
therefore:
\[
|S_1\setminus T_1|<0.
\]
A contradiction.

We shall now prove that not only does $S_1$ equal $T_1$, but in
fact $S_i = T_i$ for every $i$. 
Since $S_1 = T_1$ it must be
that $v_1(S_1)+v'_1(T_1)=v'_1(S_1)+v_1(T_1)$.
 However, the strong-monotonicity property 
 (with respect to $(v_1, v_{-1})$ and $(v'_1, v_{-1})$) 
dictates that if this is true then $S=T$.

In an analogous manner we shall now turn the valuation function of
machine $2$ into $v'_2$ while keeping all the other valuation
functions in $\mathcal{I}^1$ unchanged (in case $S_2=\emptyset$ we skip this
part). That is, the valuation function of machine $2$ is changed into:

\[
v'_2(t)=\left\{%
\begin{array}{ll}
0 & t\in S_2 \\
1 & t\notin S_2. \\
\end{array}%
\right.
\]

Similar arguments show that the allocation produced by the $M$ for
this new instance will remain $S$. We can now iteratively continue
to change the valuation functions of machines $3, ..., n-1$ into
$v'_3, ..., v'_{n-1}$ respectively, without changing the allocation
the mechanism generates for these new instances. After performing
this, we are left with an instance in which every machine
$i \in [n-1]$ has the valuation function $v'_i$, and machine $n$ has
the valuation function $v_n$. We have shown that the allocation
generated by $M$ for this instance is $S$. Recall  that
$|S_n|\ge n$. Fix some  $R\subseteq S_n$ such that $|R|=n$. We will now
create a new instance $\mathcal{I}^n$ from the previous one by only altering
the valuation function $v_n$ into the following valuation function
$v'_n$:

\[
v'_n(t)=\left\{%
\begin{array}{ll}
0 & t\in S_n\setminus R \\
1 & \hbox{otherwise.} \\
\end{array}%
\right.
\]

By applying similar arguments to the ones used before, one can
show that the allocation generated by $M$ when given the instance
$\mathcal{I}^n$ remains $S$. Observe that the finishing time of $S$ for $\mathcal{I}^n$ is $n$
because all the tasks in $R$ are assigned to machine $n$. Also
notice that the finishing time of the optimal allocation of
tasks for $\mathcal{I}^n$ is precisely $1$ (by assigning the $i$'th task in $R$ to machine $i$,  for $i=1, ..., n$). The theorem follows.
\end{proof}

\subsection{A Lower Bound for Universally-Truthful Randomized Mechanisms}
\label{sub-universal}

We now present a technique for deriving lower bounds for
universally truthful mechanisms, based on Yao's principle~\cite{Yao}.
 Consider a zero-sum game with two players. Let the
row player's strategies be the various different instances of
a specific problem, and let the
column player's strategies be all the deterministic truthful
mechanisms for solving that problem. Let entry $g_{ij}$ in the
matrix $G$ depicting the game be the approximation ratio obtained
by the algorithm of column $j$ when given the instance of row $i$.

Recall that every universally-truthful randomized mechanism
 is a probability distribution over deterministic
truthful mechanisms. The straight-forward approach for proving a lower
bound for such mechanisms  is to find an instance of the
problem on which every such mechanism cannot achieve (in
expectation) a certain approximation factor. By applying the well
known Minimax Theorem to the game described above we get that an
alternate and just as powerful way for setting lower bounds is to
show that there is a probability distribution over instances on
which any deterministic mechanism cannot obtain (in expectation) a
certain approximation ratio.

We demonstrate this technique by proving a $2-\frac{1}{n}$ lower
bound for universally truthful mechanisms for the scheduling
problem. Our proof is based on finding a probability distribution
over instances of the scheduling problem for which no
deterministic truthful mechanism can provide an approximation
ratio better than $2-\frac{1}{n}$. To show this, we shall exploit
the weak-monotonicity property of truthful deterministic
mechanisms (as discussed in Subsection~\ref{sub-deterministic}).

\begin{theorem}\label{universal-bound}
Any universally truthful mechanism 
cannot achieve an approximation ratio better than $2-\frac{1}{n}$.
\end{theorem}

\begin{proof}
Let  $\epsilon$ be an arbitrarily small positive real number. Consider the scheduling problem
with $n$ machines and $m=n+1$ tasks. For every machine $i$ we
define two possible valuation functions:

\[
v_i(t)=\left\{%
\begin{array}{ll}
1 & t=i\ or\ t=n+1\\
\frac{4}{\epsilon} & \hbox{otherwise} \\
\end{array}%
\right.
\]

and

\[
v'_i(t)=\left\{%
\begin{array}{ll}
0 & t=i \\
1+\epsilon & t=n+1 \\
\frac{4}{\epsilon} & \hbox{otherwise.} \\
\end{array}%
\right.
\]

Let $\mathcal{I}$ be the instance in which the valuation function of every
machine $i$ is $v_i$. For every $j$, let $\mathcal{I}^j$ be the
instance in which every machine $i\neq j$ has the valuation
function $v_i$, and machine $j$ has the valuation function $v'_j$.
We are now ready to define the probability distribution $P$ over
instances: instance $\mathcal{I}$ is assigned probability $\epsilon$,
and for every $j$ instance $\mathcal{I}^j$ is picked with probability
$\frac{1-\epsilon}{n}$.

We now need to show that any deterministic truthful mechanism $M$
cannot achieve an approximation ratio better than $2-\frac{1}{n}$
on $P$. Let $T^j$ be the allocation of the $n+1$ tasks to the $n$
machines in which every machine $i$ gets task $i$, and machine $j$
is also assigned task $n+1$. Observe, that $T^j$ is the optimal
allocation of tasks for the instance $\mathcal{I}^j$. Also observe, that while
the finishing time of the allocation $T^j$ for the instance $\mathcal{I}^j$ is
$1+\epsilon$, the finishing time of any other allocation of tasks
is at least $2$. We shall denote the allocation that $M$ outputs for
the instance $\mathcal{I}$ by $M(\mathcal{I})$. 
Similarly, we shall denote the allocation that $M$ outputs
 for the instance $\mathcal{I}^j$ by $M(\mathcal{I}^j)$ (for every $j\in[n]$).
We will now examine two distinct cases: The case in which
$M(\mathcal{I})\neq T^r$ for any $r\in [n]$, and the case that $M(\mathcal{I})=T^r$
for some $r\in [n]$.

Observe, that in the first case the finishing time is at
least $\frac{4}{\epsilon}$ while the optimal finishing time is $2$.
 Since instance $\mathcal{I}$ appears in $P$ with probability $\epsilon$ we have that
$M$'s expected approximation ratio is at least
\[
\frac{\frac{4}{\epsilon}\times \epsilon + (1+\epsilon)\times (1-\epsilon)}
{2\times \epsilon + (1+\epsilon)\times (1-\epsilon)} > 2-\frac{1}{n}.
\]

We are left with the case in which $M(\mathcal{I})=T^r$ for some $r\in [n]$.
Consider an instance $\mathcal{I}^j$ such that $j\neq r$. The following lemma
states that $M$ will not output the optimal allocation for $\mathcal{I}^j$
(that is, $T^j$).

\begin{lemma} \label{monotone1}
If $M(\mathcal{I})=T^r$ for some $r\in [n]$, then for every $j\neq r$
$M(\mathcal{I}^j)\neq T^j$.
\end{lemma} 

\begin{proof}
Let $j\neq r$. Let us assume by contradiction that $M(\mathcal{I}^j)=T^j$.
The weak-monotonicity property dictates that
$v_j({j})+v'_j(\{j,n+1\})\le v'_j({j})+v_j(\{j,n+1\})$. By
assigning values we have that $1+(1+\epsilon)\le (1+1)$, 
and reach a contradiction.
\end{proof}

\vspace{0.1in}

From Lemma~\ref{monotone1} we learn that if $M(\mathcal{I})=T^r$ (for some
$r\in [n]$) then we have that  $M(\mathcal{I}^j)$ (for every $j\neq r$) is an
allocation that is not the optimal one (i.e., not $T^j$). In fact
(as mentioned before), any allocation that  $M$ outputs given $\mathcal{I}^j$ will
have a finishing time of at least $2$, while the optimal
allocation ($T^j$) has a finishing time of $1+\epsilon$.  The expected approximation
ratio of $M$ for $P$ is therefore at least

\[ \frac{2 \cdot \frac{(n-1)\cdot(1-\epsilon)}{n} + (1+\epsilon)\cdot \frac{1-\epsilon}{n} + 2\epsilon}{(1+\epsilon)\cdot (1-\epsilon)+ 2\epsilon} \ge 
\frac{(2-\frac{1}{n})\cdot(1-\epsilon)}{(1+\epsilon)\cdot (1-\epsilon)+ 2\epsilon}
\]

Since $\lim_{\epsilon \to 0} \frac{(2-\frac{1}{n})\cdot(1-\epsilon)} {(1+\epsilon)\cdot (1-\epsilon)+ 2\epsilon} = 2-\frac{1}{n}$,  the theorem follows. 
\end{proof}

\subsection{A Lower Bound for Mechanisms that are Truthful-in-Expectation} 
\label{sub-expectation}
After handling the case of universally truthful mechanisms we now
turn to the weaker notion of truthfulness-in-expectation. Any such mechanism can be regarded as
a mechanism that for every instance of a problem produces a
probability distribution over possible alternatives. In this subsection 
we consider  risk neutral players with quasi-linear utility functions.

We start by generalizing the weak-monotonicity definition. 
We consider the expected value with respect to a distribution 
over the set of alternatives $A$: 

\begin{definition}\label{def-in-expectation-valuation-function}
Let $v$ be a valuation function. We define the 
extended valuation function $V_{v}$ as follows. For every probability
distribution $P$ over the set of alternatives $A$
\[
V_{v}(P)=\Sigma_{a\in A}\ Pr_ P[a]\times v(a).
\]
\end{definition}

Arguments similar to those of Lemma~\ref{monotonicity-lemma} show
that randomized mechanisms that are truthful-in-expectations must
be weakly monotone (given the new definition of the valuation
functions). This \emph{extended weak-monotonicity} is equivalent
to the \emph{monotonicity-in-expectation} property defined by Lavi
and Swamy~\cite{LS05}.

\begin{definition}
Let $M$ be a randomized mechanism. Let $i\in [n]$ and let
$v=(v_1, ..., v_n)$ be an $n$-tuple of players' valuations. Let
$v'_i$ be a valuation function. Denote by $P$ the distribution
over alternatives that $M$ outputs for $v$ and by $Q$ the distribution
over alternatives that $M$ outputs for $(v'_i, v_{-i})$. 
The mechanism  $M$ is said to
be $\mathrm{weakly \; monotone\; in\; the \; extended \; sense}$
 if for all such $i$, $v$,
and $v'_i$ it holds that:
 $V_{v_i}(P)+V_{v'_i}(Q)\ge
V_{v'_i}(P)+V_{v_i}(Q)$.
\end{definition}

\begin{lemma}[Lavi and Swamy~\cite{LS05}]\label{rand-monotonicity-lemma}
Any truthful-in-expectation mechanism must be weakly-monotone in the extended sense.
\end{lemma}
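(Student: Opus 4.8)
The plan is to mimic the proof of Lemma~\ref{monotonicity-lemma} almost verbatim, replacing ``value of an alternative'' by ``value of a probability distribution over alternatives'' via the extended valuation function $V_{v}$. First I would fix a truthful-in-expectation mechanism $M$, an index $i\in[n]$, an $n$-tuple $v=(v_1,\dots,v_n)$, and an alternative report $v'_i$, and let $P$ be the distribution $M$ outputs on $v$ and $Q$ the distribution $M$ outputs on $(v'_i,v_{-i})$. The key observation is that, just as in the deterministic case, the expected payment charged to player $i$ cannot depend on $i$'s own report: it is a function $p_i(v_{-i},P)$ of $v_{-i}$ and of the distribution the mechanism produces. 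This is exactly the standard argument that an agent's payment depends only on the others' reports and on the chosen outcome — here the ``outcome'' is a distribution, and ``payment'' means expected payment.

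Next I would write down the two truthfulness-in-expectation inequalities. When $i$'s true valuation is $v_i$, reporting truthfully must be at least as good in expectation as reporting $v'_i$:
\[
V_{v_i}(P)-p_i(v_{-i},P)\ \ge\ V_{v_i}(Q)-p_i(v_{-i},Q).
\]
Symmetrically, when $i$'s true valuation is $v'_i$, truthful reporting beats deviating to $v_i$:
\[
V_{v'_i}(Q)-p_i(v_{-i},Q)\ \ge\ V_{v'_i}(P)-p_i(v_{-i},P).
\]
Adding these two inequalities cancels the payment terms and yields $V_{v_i}(P)+V_{v'_i}(Q)\ge V_{v'_i}(P)+V_{v_i}(Q)$, which is precisely extended weak monotonicity. (For cost-minimization settings the utility is payment minus cost, so both inequalities flip and one obtains the reversed inequality, as noted in the remark.)

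The only genuinely delicate point — the ``main obstacle'' such as it is — is justifying that the expected payment is well-defined as a function of $(v_{-i},P)$ independently of $i$'s report; i.e., that a risk-neutral player's incentives depend on the mechanism only through the induced distribution over alternatives and the induced expected payment. This follows from the revelation-principle/taxation-principle reasoning already invoked in Lemma~\ref{monotonicity-lemma}: if player $i$ has two reports yielding the same distribution $P$ but different expected payments, the lower-payment report strictly dominates, contradicting truthfulness-in-expectation; hence expected payment is a function of the induced distribution (and of $v_{-i}$). Once this is granted, the rest is the same one-line addition of inequalities as in the deterministic proof, and I would simply remark that the argument is ``analogous to Lemma~\ref{monotonicity-lemma}, working with $V_{v_i}$ in place of $v_i$.''
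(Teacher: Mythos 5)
Your proof is correct and is exactly the adaptation the paper has in mind — the paper itself gives no separate proof, stating only that ``arguments similar to those of Lemma~\ref{monotonicity-lemma}'' apply with $V_{v_i}$ in place of $v_i$, which is what you carry out. One small remark: the taxation-principle step you flag as the ``delicate point'' is not actually needed here (nor in Lemma~\ref{monotonicity-lemma}); writing the expected payment simply as $\bar p_i(v_i,v_{-i})$ and $\bar p_i(v'_i,v_{-i})$, the two truthfulness-in-expectation inequalities already contain the same pair of payment terms, which cancel upon addition, so the conclusion follows without first arguing that payment depends only on $(v_{-i},P)$.
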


We can exploit this extended definition of weak-monotonicity to
prove inapproximability results. We show how this is done by
strengthening our $2-\frac{1}{n}$ lower bound for universally-truthful randomized mechanisms by showing that it applies even for the case
of truthfulness-in-expectation. 

A key element in the proof of Theorem~\ref{expectation-bound} is
the observation that instead of regarding a randomized mechanism
for the scheduling problem as generating probability distributions
over allocations of tasks, it can be regarded as generating, for
each task, a probability distribution over the machines it is
assigned to by the mechanism.  
This different view of a
randomized mechanism for this specific problem, enables us to
better analyse the contribution of each task to the expected makespan.

The main lemma in the proof of Theorem~\ref{expectation-bound},
namely Lemma~\ref{monotone2}, makes use of this fact together with
the extended weak-monotonicity condition.
Lemma~\ref{monotone2} essentially proves that for two carefully
chosen instances of the problem, the probability that a specific
task is assigned to a specific machine in one of the
instances, cannot be considerably higher than the probability it
is assigned to the same machine in the other. Thus, we show that
even though allocating this task to that machine in one of the
instances leads to a good approximation, any truthful-in-expectation
mechanism will fail to do so.

\begin{theorem}\label{expectation-bound}
Any  mechanism that is weakly-monotone in the extended
sense  cannot achieve an approximation ratio better than
$2-\frac{1}{n}$.
\end{theorem}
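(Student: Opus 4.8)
The plan is to mirror the structure of the proof of Theorem~\ref{universal-bound}, but since we now only have \emph{extended} weak monotonicity (monotonicity-in-expectation) rather than a deterministic allocation to argue about, I would work directly with the probability distributions the mechanism induces. Following the ``key element'' remark preceding the statement, I would regard a randomized mechanism $M$ for the scheduling problem not as producing a distribution over full allocations, but as producing, for each task $t$, a probability vector $(x^M_{t,1},\dots,x^M_{t,m})$ where $x^M_{t,i}$ is the probability that $M$ assigns task $t$ to machine $i$; by linearity of the $v_i$'s the expected makespan contribution of each task is controlled by these marginals, and the extended valuation $V_{v_i}(P)$ of the induced distribution $P$ equals $\sum_t x^M_{t,i}\, v_i(t)$.

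I would reuse the same instances as in Theorem~\ref{universal-bound}: the ``all-ones'' instance $I$ with $v_i$, and the perturbed instances $I^j$ in which machine $j$ switches to $v'_j$ (value $0$ on its own task $j$, value $1+\epsilon$ on task $m+1$, huge elsewhere). The core of the argument is Lemma~\ref{monotone2}, which I would state and prove as follows: applying extended weak monotonicity to the single-coordinate change $v_j \to v'_j$ (with the cost-minimization inequality reversed), and using that the valuations differ only on tasks $j$ and $m+1$, the cross terms cancel and one is left with an inequality of the form $\bigl(v_j(j)-v'_j(j)\bigr)\bigl(x^{M(I)}_{j,j}-x^{M(I^j)}_{j,j}\bigr) + \bigl(v_j(m{+}1)-v'_j(m{+}1)\bigr)\bigl(x^{M(I)}_{m+1,j}-x^{M(I^j)}_{m+1,j}\bigr)\le 0$, i.e. roughly $\bigl(x^{M(I)}_{j,j}-x^{M(I^j)}_{j,j}\bigr)\ \le\ \epsilon\,\bigl(x^{M(I^j)}_{m+1,j}-x^{M(I)}_{m+1,j}\bigr)$. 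The upshot is that the probability that $M$ behaves on $I^j$ in the ``good'' way (giving task $j$ to machine $j$ \emph{and} not overloading it) cannot be much larger than the corresponding probability on $I$. Since on $I$ all tasks and machines are symmetric-looking from the makespan standpoint, $M$ cannot simultaneously, for all $j$, route task $m+1$ to machine $j$ with high probability together with task $j$; summing the per-task expected-makespan contributions over the instances $I$ and $\{I^j\}_j$ under the same distribution $P$ (probability $\epsilon$ on $I$, $\tfrac{1-\epsilon}{m}$ on each $I^j$) as in the universal case, the expected approximation ratio is bounded below by essentially the same quantity $\tfrac{(m-1)(1-\epsilon)}{m}\cdot\tfrac{2}{1+\epsilon}+\tfrac{1-\epsilon}{m}$, which tends to $2-\tfrac1m$ as $\epsilon\to 0$.

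I expect the main obstacle to be making the ``behaves well on $I^j$'' event precise enough that Lemma~\ref{monotone2} actually forces a lower bound on the \emph{expected} makespan, not merely on a probability. Concretely, one must quantify how an imperfect marginal $x^{M(I^j)}_{m+1,j}<1$ (or a split of task $j$ away from machine $j$) degrades the makespan of $I^j$: since every alternative to the unique optimal allocation $T^j$ has makespan at least $2$ while $T^j$ has makespan $1+\epsilon$, any constant amount of probability mass off $T^j$ costs a constant in expectation, and the monotonicity inequality caps that mass. I would also need to handle the first case from the universal proof (the mechanism placing little mass on any $T^r$ for instance $I$), which here becomes: if on $I$ the mechanism is ``spread out,'' its expected makespan on $I$ is $\Omega(1/\epsilon)$ while the optimum is $2$, and since $I$ carries probability $\epsilon$ this alone yields expected ratio $\ge 2$. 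Combining the two cases and letting $\epsilon\to 0$ closes the argument.
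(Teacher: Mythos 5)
Your plan is sound in outline, but it diverges from the paper's proof in a structurally significant way and as stated has a quantitative gap.

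\textbf{Different route.} The paper's proof of Theorem~\ref{expectation-bound} does \emph{not} place a distribution over instances. It argues directly: either some diagonal marginal $p_{i,i}(P)$ is noticeably below $1$, in which case instance $I$ alone already forces an expected makespan ratio $\ge 2$; or all diagonal marginals are close to $1$, in which case by pigeonhole some machine $r$ has $p_{r,m+1}(P)\le\frac1m$, and then Lemma~\ref{monotone2} shows $p_{r,m+1}(P^r)\le\frac1m+\epsilon$, so the single instance $I^r$ already gives expected ratio at least $\big(1-\frac1m-\epsilon\big)\frac{2}{1+\epsilon}+\big(\frac1m+\epsilon\big)$. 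Your plan instead re-runs the Yao-style averaging of Theorem~\ref{universal-bound}: put weight $\epsilon$ on $I$ and $\frac{1-\epsilon}{m}$ on each $I^j$, bound the mechanism's expected ratio against this distribution, and conclude it must be large on some instance. The paper's ``exhibit a worst instance'' route avoids the need to trade off contributions of different instances, which is precisely where your version needs care.

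\textbf{The gap.} If you literally reuse the instances of Theorem~\ref{universal-bound} (``huge'' cost $\frac{4}{\epsilon}$), the spread-out case does \emph{not} close. In the randomized setting, ``spread out'' can only reasonably mean $p_{i,i}(P)<1-\delta$ for some threshold $\delta$, and the monotonicity computation in Lemma~\ref{monotone2} only gives $p_{r,m+1}(P^r)\le p_{r,m+1}(P)+\frac{\delta}{\epsilon}$, so you need $\delta\le\epsilon^2$ for the perturbation to be $O(\epsilon)$. With $\delta=\epsilon^2$, the spread-out case gives expected makespan on $I$ of only about $\delta L=\epsilon^2 L$, hence ratio $\epsilon^2 L/2$ on $I$, and after multiplying by the weight $\epsilon$ on $I$ the contribution to your averaged ratio is $\epsilon^3 L/2$. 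With $L=\frac4\epsilon$ this is $2\epsilon^2\to 0$, not $\ge 2$. The paper sidesteps this by using $L=\frac{4}{\epsilon^2}$ \emph{and} not diluting instance $I$ by a weight $\epsilon$; in your averaging setup you would need to blow the huge cost up to $L=\Theta(1/\epsilon^3)$. This is fixable, but your proposal as written would not go through.

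\textbf{Minor.} In the displayed inequality derived from extended weak monotonicity you have the sign on the right-hand side reversed: from the general identity you write down, substituting $v_j(j)-v'_j(j)=1$ and $v_j(m{+}1)-v'_j(m{+}1)=-\epsilon$ yields $x^{M(I)}_{j,j}-x^{M(I^j)}_{j,j}\le\epsilon\big(x^{M(I)}_{m+1,j}-x^{M(I^j)}_{m+1,j}\big)$, matching the paper's conclusion that $p_{r,m+1}(P^r)$ cannot much \emph{exceed} $p_{r,m+1}(P)$; your ``roughly'' version has $I$ and $I^j$ swapped on the right. The surrounding prose (``cannot be much larger than the corresponding probability on $I$'') is correct, so this is a typo rather than a conceptual error, but it is worth flagging because the sign is exactly what makes the argument go.

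\textbf{Summary.} The marginal-probability viewpoint and the use of extended weak monotonicity are exactly right, and the lemma you propose is the paper's Lemma~\ref{monotone2}. But you should either (a) switch to the paper's ``pick the worst single instance'' structure, which avoids the parameter headache, or (b) if you insist on the averaging formulation, carefully re-tune the huge cost to $\Theta(1/\epsilon^3)$ and the spread-out threshold to $\epsilon^2$ so both cases of the analysis yield $\ge 2-\frac1m-o(1)$.
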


\begin{proof}
Let  $\epsilon$ be an arbitrarily small positive real number. 
 Consider the scheduling problem
with $n$ machines and $m=n+1$ tasks. For every machine $i$
we define two possible valuation functions:

\begin{equation}
v_i(t)=\left\{%
\begin{array}{ll}
1 & t=i\ or\  t=n+1\\
\frac{4}{\epsilon^2} & \hbox{otherwise} \\
\end{array}%
\right.
\label{eqn:v} 
\end{equation}

and
\begin{equation}
v'_i(t)=\left\{%
\begin{array}{ll}
0 & t=i \\
1+\epsilon & t=n+1 \\
\frac{4}{\epsilon^2} & \hbox{otherwise.} \\
\end{array}%
\right.
\label{eqn:v'} 
\end{equation}

Let $\mathcal{I}$ be the instance in which the valuation function of every
machine $i$ is $v_i$. For every $j\in [n]$ let $\mathcal{I}^j$ be the
instance in which every machine $i\neq j$ has the valuation
function $v_i$, and machine $j$ has the valuation function $v'_j$.
Let $T^j$ be the allocation of the $n+1$ tasks to the $n$ machines
in which every machine $i$ gets task $i$, and machine $j$ is also
assigned task $n+1$.

Let $M$ be a mechanism that is weakly-monotone in the
extended sense. We shall denote by $P$ the distribution over all
possible allocations produced by $M$ when given instance $\mathcal{I}$, and
by $P^j$ the distribution over all possible allocations $M$
produces when given instance $\mathcal{I}^j$. Let $R$ be some distribution
over the possible allocations. Fix a machine $i$ and a task $t$,
we define $p_{i,t}(R)$ to be the probability that machine $i$ gets
task $t$ given $R$. Formally,
$p_{i,t}(R)=\Sigma_{a|t\in a_i}\ Pr_R[a]$. Observe that
$V_{v_i}(R)=\Sigma_{t\in[m]}\ p_{i, t}(R)v_i(t)$ and
$V_{v'_i}(R)=\Sigma_{t\in[m]}\ p_{i, t}(R)v'_i(t)$.

We are now ready to prove the theorem. In order to do so, we prove
that for every mechanism $M$, as defined above, one can find an
instance of the scheduling problem for which $M$ fails to give an
approximation ratio better than $2-\frac{1}{n}$. 
Consider instance $\mathcal{I}$.

If  $p_{i, i}(P)<1-\epsilon^2$ (for some $i\in[n]$) then machine $i$
does not get task $i$ with probability of at least $\epsilon^2$.
However, when machine $i$ does not get task $i$, the finishing
time of a schedule for $\mathcal{I}$ cannot be less than
$\frac{4}{\epsilon^2}$, while the optimal finish time is $2$.
Therefore, with probability of at least $\epsilon^2$ the
makespan of the  algorithm is at least
$\frac{4}{\epsilon^2}$. If this is the case, 
the approximation ratio is at least $2$ (and the theorem follows).
Hence, from now on we will only deal with the case in which for
every $i\in[n]$, 
\[
p_{i, i}(P)\ge 1-\epsilon^2.
\]

Let $r$ be some machine such that $p_{r, n+1}(P)\le \frac{1}{n}$ 
(recall that $\Sigma_{i \in [n]}\  p_{i, t}(P) = 1$ for every task $t \in [m]).$
Intuitively, $r$ is a machine that is hardly assigned task $n+1$
in $P$. 

We will show that in this case we can choose the instance
$\mathcal{I}^{r}$ to prove our lower bound. The main idea of the proof is
showing that machine $r$ will not be assigned task $n+1$ in $P^r$
with probability that is significantly higher than the probability
it was assigned the task in $P$. 

\begin{lemma}\label{monotone2}
Let $r$ be some machine such that $p_{r, n+1}(P)\le \frac{1}{n}$.
It holds that $p_{r, n+1}(P^r)\le \frac{1}{n}+\epsilon$.
\end{lemma}
\begin{proof}
As $M$ is weakly-monotone in the extended sense (and since this is a cost scenario) we have that
$V_{v_r}(P)+V_{v'_r}(P^r) \le V_{v'_r}(P)+V_{v_r}(P^r)$. That is:
\[
\;\; \Sigma_{t\in[m]} \ p_{r, t}(P)v_r(t)+\Sigma_{t\in[m]} \ p_{r, t}(P^r)v'_r(t)\le
\]
\[
\Sigma_{t\in[m]} \ p_{r, t}(P)v'_r(t)+\Sigma_{t\in[m]} \ p_{r, t}(P^r)v_r(t).
\]
After subtracting identical terms from both sides 
we have:
\[
\;\;p_{r, r}(P)v_r(r)+p_{r, n+1}(P)v_r(n+1)+p_{r, r}(P^r)v'_r(r)+p_{r, n+1}(P^r)v'_r(n+1)\le
\]
\[
p_{r, r}(P)v'_r(r)+p_{r, n+1}(P)v'_r(n+1)+p_{r, r}(P^r)v_r(r)+p_{r, n+1}(P^r)v_r(n+1)
\]
By~(\ref{eqn:v}) and~(\ref{eqn:v'}) we have:
\[
p_{r, r}(P)+p_{r, n+1}(P)+p_{r, n+1}(P^r)\times (1+\epsilon)\le
p_{r, n+1}(P)\times (1+\epsilon)+p_{r, r}(P^r)+p_{r, n+1}(P^r)
\]
Therefore:
\[
p_{r, r}(P)+p_{r, n+1}(P^r)\times\epsilon\le
p_{r, n+1}(P)\times\epsilon+p_{r, r}(P^r)
\]
Because $p_{r, r}(P)\ge 1-\epsilon^2$ and $p_{r, r}(P^r)\le 1$ we have:
\[
(1-\epsilon^2)+p_{r, n+1}(P^r)\times \epsilon\le
p_{r, n+1}(P)\times \epsilon+1
\]
Equivalently, $p_{r, n+1}(P^r)\le p_{r, n+1}(P)+\epsilon.$
\end{proof}

\vspace{0.1in}

We next show that $M$ fails to provide an
approximation ratio better than $2-\frac{1}{n}$ for $\mathcal{I}^{r}$. The
optimal allocation for $\mathcal{I}^{ r}$ is $T^r$, which has a finishing time
of $1+\epsilon$. Any other allocation has a finishing time of at
least $2$. However,  with high
probability $T^r$ is not reached by $M$, since $p_{r,n+1}(P^r)\le
\frac{1}{n}+\epsilon$, and since machine $r$ gets task $n+1$ in
$T^r$, we know the probability that $M$ outputs $T^r$ is at most
$\frac{1}{n}+\epsilon$. 
The expected approximation ratio of $M$ is
therefore at least
\[
\frac {(\frac{1}{n}+\epsilon)\times(1+\epsilon)  +(1-(\frac{1}{n}+\epsilon))\times2}{1+\epsilon} \ge 
(2-\frac{1}{n})\cdot \frac {1-\epsilon}{1+\epsilon}
\] 
Since $\lim_{\epsilon \to 0} \frac {1-\epsilon}{1+\epsilon} = 1$,  the theorem follows.
\end{proof}

\subsection{A Lower Bound for Bayesian Incentive Compatible Mechanisms} 
\label{sub-sec-LB-Bayes}
We now turn to the notion of Bayesian mechanisms, where 
players' valuations are drawn from a  distribution that is public knowledge. 
We show a lower bound of $1.2$ (even for two machines  and three tasks). 
No lower bound for Bayesian Incentive Compatible mechanisms in multi-parameter settings was previously known.
In what follows,  we restrict our attention to 
deterministic mechanisms that deterministically output an allocation (in particular, for any given input each task will be always deterministically allocated to the same machine).

\begin{theorem}\label{bayesian-bound}
Any   Bayesian Incentive Compatible deterministic mechanism  
cannot achieve an approximation ratio strictly better than $1.2$. 
\end{theorem}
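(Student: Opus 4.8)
The plan is to mimic the structure of the deterministic lower bound (Theorem~\ref{deterministic-bound}) but work with a prior distribution that is supported on a small constant number of valuation profiles, and exploit a Bayesian analogue of weak monotonicity. First I would set up a two-machine instance with a small number of tasks (most likely three tasks, as in the proof of Theorem~\ref{deterministic-bound}), and define for each machine two candidate valuation vectors, say $v_i$ and $v'_i$, so that the distribution $D$ puts mass on both. The key conceptual tool is that for a Bayesian Incentive Compatible deterministic mechanism, averaging the BIC inequality over $v_{-i}\sim D_{-i}$ and adding the two inequalities obtained by swapping the roles of $v_i$ and $v'_i$ yields a weak-monotonicity-type inequality but in \emph{expectation over the other players' types}: writing $a$ and $b$ for the (random, depending on $v_{-i}$) outputs on $(v_i,v_{-i})$ and $(v'_i,v_{-i})$, one gets $E_{v_{-i}}[v_i(a)+v'_i(b)] \le E_{v_{-i}}[v'_i(a)+v_i(b)]$ (with the inequality reversed for costs, as in the Remark). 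This is exactly the payment-cancellation argument of Lemma~\ref{monotonicity-lemma}, now applied inside an expectation.

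Next I would use this expected weak-monotonicity inequality together with the approximation guarantee to derive a contradiction if the ratio is better than $1.25$. The idea is: on each profile in the support, the approximation requirement forces the mechanism's allocation into a restricted set (or a convex combination of makespan costs within a $1.25$ factor of optimal); I would then carefully choose the numerical task costs in $v_i, v'_i$ and the probabilities in $D$ so that achieving ratio better than $1.25$ simultaneously on the profiles where $v_i$ appears \emph{and} on the profiles where $v'_i$ appears is incompatible with the averaged monotonicity inequality for some player $i$. Concretely, I expect to need a gadget where, to do well on the $v_i$-profiles, machine $i$ must (in expectation) be assigned a certain "swing" task, while to do well on the $v'_i$-profiles it must (in expectation) \emph{not} be assigned that task; the monotonicity inequality bounds how much the conditional assignment probability can differ between the two, and the numbers are rigged so that the required gap exceeds what monotonicity permits. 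The target $1.25$ (rather than $2-1/m$) suggests the "loss" here is milder than in the universal/truthful-in-expectation cases because we only have weak monotonicity \emph{on average over $v_{-i}$}, not pointwise, so less can be squeezed out — hence I would not expect the strong monotonicity iteration of Theorem~\ref{SMON-bound} to be available.

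The main obstacle, I expect, is that Bayesian weak monotonicity only constrains the mechanism \emph{on average over the opponent's realized type}, so I cannot reason profile-by-profile as in Theorem~\ref{deterministic-bound}; I have to make the distribution $D$ concentrated enough (e.g. putting most of the mass on one "anchor" profile and small mass $\delta$ on the perturbed ones, much like the $\epsilon$/$\frac{1-\epsilon}{m}$ weighting in Theorem~\ref{universal-bound}) that the averaged inequality effectively pins down the mechanism's behaviour on the high-probability profile, and then argue that this forced behaviour is bad on one of the low-probability profiles. Balancing these probabilities and the task-cost magnitudes to land exactly at $1.25$ — and checking that the optimal makespans on the relevant profiles really are in the claimed ratio — will be the delicate computational core. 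Once the gadget and distribution are fixed, the remainder is: (i) invoke the averaged BIC $\Rightarrow$ expected-weak-monotonicity step, (ii) plug in the chosen numbers, (iii) observe that approximation ratio $<1.25$ on the support of $D$ contradicts the resulting inequality for player $1$ (or $2$), completing the proof.
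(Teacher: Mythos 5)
Your high-level plan does match the paper's approach: the key tool, which you correctly identify, is that adding the two BIC inequalities (player $i$ with true type $v_i$ misreporting $v'_i$, and vice versa) and taking expectations over $v_{-i}\sim D_{-i}$ cancels the payments and yields an averaged weak-monotonicity constraint $E_{v_{-i}}[v_i(a)+v'_i(b)]\le E_{v_{-i}}[v'_i(a)+v_i(b)]$ (in the cost direction), and the gadget is indeed the same three-task, two-valuation-per-machine construction as in Theorem~\ref{deterministic-bound}. That much is right and is the heart of the proof.

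Where the plan goes wrong is the choice of prior. You propose a concentrated distribution, ``most of the mass on one anchor profile and small mass $\delta$ on the perturbed ones, much like the weighting in Theorem~\ref{universal-bound}.'' The paper instead uses the \emph{uniform} prior: each of the two active machines draws $v_i$ or $v'_i$ with probability $\tfrac12$ independently, so each of the four profiles has weight $\tfrac14$. This is not cosmetic; the constant $1.25=\tfrac34\cdot 1+\tfrac14\cdot 2$ comes exactly from the fact that if the mechanism is ever suboptimal on one of the hard profiles $(v'_1,v_2)$ or $(v_1,v'_2)$ (each of weight $\tfrac14$, with optimum $1+\epsilon$ against a next-best of $2$), its expected ratio already reaches $1.25$. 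With a concentrated prior, a bad profile contributes only $O(\delta)$ to the expected ratio, so the approximation guarantee would not force the mechanism's hand there, and you would not be able to land at $1.25$ -- you would effectively be back to the pointwise deterministic argument.

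The second gap is the case analysis that actually closes the proof. Once the approximation guarantee forces $M(v'_1,v_2)=T^1$, $M(v_1,v'_2)=T^2$, and $M(v_1,v_2),M(v'_1,v'_2)\in\{T^1,T^2\}$ (the latter because any other allocation has makespan at least $4/\epsilon$), one must still enumerate (up to symmetry) the two possibilities for $\bigl(M(v_1,v_2),M(v'_1,v'_2)\bigr)$ and check that in each case the averaged weak-monotonicity inequality for player $1$ collapses, after the payments cancel, to $v_1(T^1)+v'_1(T^2)\ge v_1(T^2)+v'_1(T^1)$, i.e.\ $2\ge 2+\epsilon$ -- a contradiction. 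Your plan gestures at a ``swing task'' probability-gap argument but never produces this finite enumeration, which is where the contradiction actually materializes. So: correct framework and correct key lemma, but the wrong prior (which would block you from reaching $1.25$) and a missing case analysis are genuine gaps.
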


\begin{proof}
Let  $\epsilon$ be an arbitrarily small positive real number. 
Consider a setting with two machines and three tasks
(the generalization for $n > 2$  is straightforward).
 We define a product distribution with  
two possible  \emph{equally likely}
 valuation functions $v_i$ and $v'_i$ for every machine $i=1, 2$ 
 (notice that the processing times of the tasks are not identical across machines):

\[
v_i(t)=\left\{%
\begin{array}{ll}
1 & t=i\ or\ t=3\\
\frac{4}{\epsilon} & \hbox{otherwise} \\
\end{array}%
\right.
\]

and

\[
v'_i(t)=\left\{%
\begin{array}{ll}
0 & t=i \\
1+\epsilon & t=3\\
\frac{4}{\epsilon} & \hbox{otherwise.} \\
\end{array}%
\right.
\]

Assume by contradiction that there exists a 
\emph{deterministic} Bayesian Incentive Compatible mechanism $M$
with an expected approximation ratio $1.2 -\delta$  for some $\delta > 0$. 
We shall denote the allocation that $M$ outputs for
the instance $\mathcal{I}$ by $M(\mathcal{I})$. 

Let $T^j$ be the allocation of the tasks to machines
in which task $i$ is assigned to machine $i$, and task $3$ is assigned to machine $j$
   (where $i, j \in \{1,\ 2\}$).  
   Notice that  $M(v_1, v_2) , M(v'_1, v'_2) \in \{T^1, T^2\}$ since for any allocation that gives the first task to the second machine or the second task to the first  machine we have 
\[
\frac{\frac 1 4 \cdot \frac{4}{\epsilon}+\frac 3 4 \cdot(1+\epsilon)}{ \frac 1 4 \cdot 2 + \frac 3 4\cdot(1+\epsilon)}   > 1.2 - \delta.
\]

   Now, $T^1, T^2$  are optimal 
   for the instances $(v'_1, v_2)$, $(v_1, v'_2)$, respectively.  Furthermore, we must have that 
$M(v'_1, v_2)=T^1, M(v_1, v'_2)=T^2$  under the assumption that an expected approximation ratio strictly better than $1.2$ can be achieved. 
More formally, observe that the finishing time of any other allocation of tasks
is at least $2$, and clearly  for a small
enough $\epsilon$  we have
\[
\frac{\frac 2 4 \cdot 2  +\frac 2 4 \cdot (1+\epsilon)}{ \frac 1 4 \cdot 2 + \frac 3 4 \cdot(1+\epsilon) }   > 1.2 - \delta.
\]

Without loss of generality assume that $M(v_1, v_2) = T^2$.  
By symmetry it is enough to consider two cases: $M(v'_1, v'_2)=T^1$ and 
$M(v'_1, v'_2)=T^2$.

In the first case we have that $M(v_1, v_2)=M(v_1, v'_2) = T^2, M(v'_1, v_2)=M(v'_1, v'_2)=T^1$. 
Recall that the probability that the valuation function of machine 2 is $v_2$ (or $v'_2$)
is $\frac12$. 
By Bayesian incentive compatibility (with respect to player 1)  
we have that (for the case where the valuation function of player 1  is $v_1$):
\[
   \frac12 (p_1(v_1, v_2) - v_1(T^2)) + \frac12 (p_1(v_1, v'_2) - v_1(T^2))  \geq
   \frac12 (p_1(v'_1, v_2) - v_1(T^1)) + \frac12 (p_1(v'_1, v'_2) - v_1(T^1)).
 \]

By Bayesian incentive compatibility (with respect to player 1)  
we have that (for the case where the valuation function of machine 1  is $v'_1$):
\[
\frac12 (p_1(v'_1, v_2) - v'_1(T^1)) + \frac12 (p_1(v'_1, v'_2) - v'_1(T^1))  \geq
 \frac12 (p_1(v_1, v_2) - v'_1(T^2)) + \frac12 (p_1(v_1, v'_2) - v'_1(T^2)).
\]     
  
  By adding up these two inequalities and  cancelling out  the payments from both sides we have that  
  $v_1(T^1) + v'_1(T^2) \ge v_1(T^2) + v'_1(T^1).$  
  A contradiction (since $2 + 0 < 1 + (1+\epsilon)$). 
  
 We are left with the case in which $M(v_1, v_2)=M(v_1, v'_2)=M(v'_1, v'_2)=T^2,  \; M(v'_1, v_2)=T^1$.
By Bayesian incentive compatibility we have that  
\[
 \frac12 (p_1(v_1, v_2) - v_1(T^2)) + \frac12 (p_1(v_1, v'_2) - v_1(T^2))  \geq
   \frac12 (p_1(v'_1, v_2) - v_1(T^1)) +  \frac12 (p_1(v'_1, v'_2) - v_1(T^2)),
 \]
and
\[
\frac12 (p_1(v'_1, v_2) - v'_1(T^1)) + \frac12 (p_1(v'_1, v'_2) - v'_1(T^2))  \geq
   \frac12 (p_1(v_1, v_2) - v'_1(T^2)) + \frac12 (p_1(v_1, v'_2) - v'_1(T^2)).
\]   
Equivalently, 
  \[
 \frac12 (p_1(v_1, v_2) - v_1(T^2)) + \frac12 p_1(v_1, v'_2)  \geq
   \frac12 (p_1(v'_1, v_2) - v_1(T^1)) +  \frac12 p_1(v'_1, v'_2) ,
 \]
and
\[
\frac12 (p_1(v'_1, v_2) - v'_1(T^1)) + \frac12 p_1(v'_1, v'_2)  \geq
   \frac12 (p_1(v_1, v_2) - v'_1(T^2)) + \frac12 p_1(v_1, v'_2). 
\]   
 
 By adding up these two inequalities and rearranging we have that 
  $v_1(T^1) + v'_1(T^2) \ge v_1(T^2) + v'_1(T^1).$ 
  A contradiction (since $2 + 0 < 1 + (1+\epsilon)$). 
\end{proof}

\section{Workload Minimization in Inter-Domain Routing}\label{sec-applications}
In this section, we study another
 non-utilitarian multi-dimensional problem -- \emph{workload
minimization in inter-domain routing}. Feigenbaum, Papadimitriou,
Sami, and Shenker formulated the inter-domain routing problem as a
distributed mechanism design problem~\cite{FPSS} (inspired by the
extensive literature on the real-life problem of inter-domain
routing in the Internet). Several works that study
their model and its extensions have been published~\cite{FKMS,FRS,FSS}. 
All these works deal with the
realization of utilitarian social-choice functions, and focus on the efficient and
distributed design of VCG mechanisms.

Workload minimization is a problem that arises naturally in the
design of routing protocols, as we wish that 
no single Autonomous System (AS) will be overloaded with work.
 It can easily be shown
that any  VCG mechanism performs very poorly with respect  to
workload minimization. Thus, while optimally minimizing the total
cost, or maximizing the social welfare, the known truthful
mechanisms for this problem can result in workloads that are very
far from optimal (in which one AS is burdened by the traffic sent
by all other ASes). We initiate the study of truthful workload
minimization in inter-domain routing by presenting constant lower
bounds that apply to any truthful mechanism (deterministic and randomized).

\subsection{The Setting}

We are given a \emph{directed graph} $G=\langle N, L\rangle$ (called the \emph{AS
graph}) in which the set of nodes $N$ corresponds to the
Autonomous Systems (ASes) of which the Internet is comprised. The set $N$
consists of a \emph{destination} node $d$, and $n$ \emph{source} nodes 
 (see Example~\ref{example:WL} below). 
The set of edges $L$ corresponds to communication links
between the ASes. Each source node $i\in[n]$ is a strategic player. The
number of packets (intensity of traffic) originating in source
node $i$ and destined for $d$ is denoted by $t_i$. 
 The directed graph $G$ and number of packets $t_i$s are public knowledge. 

Let $neighbours(i)$ be all the ASes that are directly  linked to $i$ in the AS
graph. Each source node $i$ has a private cost function
$c_i\mbox{: }neighbours(i)\rightarrow \mathbb{R_{+}}$ that specifies the
per-packet cost incurred by this node for carrying traffic.  This
cost function represents the additional  load imposed on the
internal AS network when sending a packet from $i$ to an adjacent
AS. In the formulation of the problem in~\cite{FPSS}, a node
does not incur a cost for packets that originate in that node.
However, since we are interested in  workload minimization, this is not
the case in our formulation. 
Additionally, as we are interested in proving lower bounds we can restrict our attention 
to the model in which the number of packets $t_i$s are public  knowledge.
If the mechanism is truthful then player $i$ can never improve its utility
by misreporting his private cost function $c_i$ to the mechanism (no matter what the other 
players do). 
In the \emph{single-dimensional version of
this problem} an AS $i$ incurs the \emph{same} per-packet private  cost $c_i$ for
sending traffic to each of its neighbours. 

The set of alternatives $A$ contains all possible \emph{route allocation} that 
form a confluent tree to the
destination $d$. I.e., no source node is allowed to transfer traffic to two
adjacent nodes. We seek truthful mechanisms that output routing
trees in which the {\it workload}  imposed on the busiest source node is
minimized. Formally, let $N^T_i$ be the set of all nodes whose paths
in the routing tree $T$ go through node $i$. Let $s^T(i)$ be the subsequent
node $i$ transfers traffic to in the routing tree   $T$. 
The global goal is to minimize the
expression 
\[
max_i\ \Sigma_{j\in N^T_i}\ t_j\times c_i(s^T(i))
\]
over all possible routing trees $T$. Each source node $i$ is a strategic  player who
wishes to minimize his workload  ($=\Sigma_{j\in N^T_i}\ t_j\times c_i(s^T(i))$) minus the payment made to him by the mechanism. 
If the mechanism is truthful then player $i$ can never improve its utility
by misreporting his private cost  function $c_i$ to the mechanism (no matter what the other players do).

\begin{remark}
It is easy to verify that the single-dimensional related machine 
scheduling makespan minimization problem~\cite{AT}
 is a special important case of the of the single-dimensional workload
minimization problem (furthermore, from any $c$-approximation truthful mechanism  for workload minimization one can construct a $c$-approximation truthful mechanism  for makespan minimization). 
However, the problem of multi-dimensional workload
minimization is {\it strategically} different from the multi-dimensional unrelated machine scheduling makespan minimization problem that we study in Section~\ref{sec-lower} 
(e.g., since $t_i$ is a public information and the cost $c_i$ is a function of 
the direct successor nodes  rather than the direct predecessor nodes).   
\end{remark}

\begin{figure}[hbt]
\begin{center}
\setlength{\unitlength}{0.00052493in}
\begingroup\makeatletter\ifx\SetFigFont\undefined%
\gdef\SetFigFont#1#2#3#4#5{%
  \reset@font\fontsize{#1}{#2pt}%
  \fontfamily{#3}\fontseries{#4}\fontshape{#5}%
  \selectfont}%
\fi\endgroup%
{\renewcommand{\dashlinestretch}{30}
\begin{picture}(2133,2550)(0,-10)
\thicklines
\path(1813,2223)(328,1323)
\blacken\path(481.158,1446.518)(328.000,1323.000)(508.369,1401.620)(481.158,1446.518)
\path(1799,317)(314,1217)
\blacken\path(494.369,1138.380)(314.000,1217.000)(467.158,1093.482)(494.369,1138.380)
\path(1770,1275)(420,1275)
\blacken\path(615.000,1305.000)(420.000,1275.000)(615.000,1245.000)(615.000,1305.000)
\path(1815,2179)(1815,1365)
\blacken\path(1785.000,1560.000)(1815.000,1365.000)(1845.000,1560.000)(1785.000,1560.000)
\path(1815,371)(1815,1185)
\blacken\path(1845.000,990.000)(1815.000,1185.000)(1785.000,990.000)(1845.000,990.000)
\put(1995,1185){\makebox(0,0)[lb]{\smash{{\SetFigFont{12}{14.4}{\rmdefault}{\mddefault}{\updefault}x}}}}
\put(1275,1320){\makebox(0,0)[lb]{\smash{{\SetFigFont{10}{12.0}{\rmdefault}{\mddefault}{\updefault}$1$}}}}
\put(375,1815){\makebox(0,0)[lb]{\smash{{\SetFigFont{10}{12.0}{\rmdefault}{\mddefault}{\updefault}$1+\epsilon$}}}}
\put(1860,1815){\makebox(0,0)[lb]{\smash{{\SetFigFont{10}{12.0}{\rmdefault}{\mddefault}{\updefault}$0$}}}}
\put(1860,555){\makebox(0,0)[lb]{\smash{{\SetFigFont{10}{12.0}{\rmdefault}{\mddefault}{\updefault}$0$}}}}
\put(375,555){\makebox(0,0)[lb]{\smash{{\SetFigFont{10}{12.0}{\rmdefault}{\mddefault}{\updefault}$1+\epsilon$}}}}
\put(1680,2265){\makebox(0,0)[lb]{\smash{{\SetFigFont{12}{14.4}{\rmdefault}{\mddefault}{\updefault}y}}}}
\put(1635,15){\makebox(0,0)[lb]{\smash{{\SetFigFont{12}{14.4}{\rmdefault}{\mddefault}{\updefault}z}}}}
\put(15,1185){\makebox(0,0)[lb]{\smash{{\SetFigFont{12}{14.4}{\rmdefault}{\mddefault}{\updefault}d}}}}
\end{picture}
} 
\caption{} \label{fig1}
\end{center}
\end{figure}

\begin{example}\label{example:WL}
Consider the routing instance in Figure $1$. Each source node has a
single packet it wishes to send to the destination. The number
beside every directed link $(u, v)$ in the figure represents the
cost $u$ incurs for transferring a packet to $v$. 
Consider the routing tree in which both $y$ and $z$
send packets through $x$, and $x$ forwards packets directly to $d$.
The workload on $x$ is $3$ in this routing tree. In the optimal routing tree all source nodes
chose to send their packets directly to $d$. This routing tree has a
maximal workload of $1+\epsilon$. 
\end{example}

\subsection{Approximability of the Single-Dimensional Case}

It is easy to show (via a simple reduction from Partition) that
even the single-dimensional version of the workload-minimization
problem is NP-hard. However, is it at all possible to \emph{optimally}
solve this problem in a truthful manner?  The answer to this
question is yes. However, the worst-case running time is exponential 
 (we note that this is also the case in the
single-dimensional version of machine-scheduling~\cite{AT}).

\begin{lemma}
There exists a truthful, deterministic, exponential-time mechanism
that always finds a workload-minimizing route allocation in the
single-dimensional case.
\end{lemma}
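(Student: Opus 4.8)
The plan is to treat this as a \emph{single-parameter} mechanism-design problem and then invoke the standard characterization of truthfulness in such domains. First I would observe that here each player $i$'s private information is the single scalar $c_i \geq 0$, and that any feasible routing tree $T$ determines, independently of the reports, a nonnegative ``load'' $w_i(T) = \sum_{j \in N^T_i} t_j$ for player $i$, so that $i$'s incurred cost is exactly $c_i \cdot w_i(T)$ and the objective to be minimized is $\max_i c_i\, w_i(T)$. In a single-parameter cost domain of this kind it is well known (cf.\ Archer and Tardos~\cite{AT}) that a social-choice rule $T(\cdot)$ admits payments making it truthful \emph{if and only if} it is \emph{monotone}: for every $i$ and every fixed $c_{-i}$, the load $w_i(T(c_i,c_{-i}))$ is non-increasing in $c_i$. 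Moreover, when monotonicity holds the payments are pinned down, up to an additive term depending only on $c_{-i}$, by the usual threshold/integral formula, and evaluating that formula only requires knowing the allocation rule at the (finitely many) breakpoints of $w_i(\cdot, c_{-i})$.

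Thus the lemma reduces to exhibiting a \emph{workload-minimizing} allocation rule that is monotone. I would construct $T(\cdot)$ by fixing once and for all a total order $\prec$ on the finite set of feasible confluent routing trees and letting $T(c)$ be the $\prec$-least tree among those minimizing $\max_i c_i w_i(T)$; a slightly more structured tie-break (for instance, lexicographically minimizing the load vector with players ordered by index) may be needed to make the monotonicity argument go through, exactly as in the related-machine scheduling analogue that the text already points to~\cite{AT}. Either way the resulting rule is, by construction, exactly optimal.

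The main obstacle is proving that this rule is monotone. I would argue by a local exchange argument: fix $i$ and $c_{-i}$, take $c_i < c_i'$, and suppose toward a contradiction that $w_i(T(c_i',c_{-i})) > w_i(T(c_i,c_{-i}))$. Writing $T = T(c_i,c_{-i})$ and $T' = T(c_i',c_{-i})$, one ``swaps'' the portions of $T$ and $T'$ incident to node $i$ — roughly, re-route through $i$ in $T'$ the traffic that $T'$ diverted away, and symmetrically in $T$ — to produce two new feasible trees, and then compares their objective values and $\prec$-positions against $T$ and $T'$. Since the only cost that changes in the relevant direction is player $i$'s, which scales linearly in $w_i$ and in $c_i$ (resp.\ $c_i'$), one of the two swaps must weakly improve the objective while strictly improving the tie-break for one of the two instances, contradicting the choice of $T$ or of $T'$. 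The delicate point — and the reason this is not merely the routine scheduling argument — is that a routing solution must remain a confluent tree, so one has to verify that the exchange preserves feasibility; this is precisely where the structured tie-break earns its keep.

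Finally, with a monotone workload-minimizing rule in hand, I would define the payments $p_i$ by the standard single-parameter formula (the constant of integration can be fixed, e.g., to enforce individual rationality, and the type space may be taken bounded so that the integral is finite), and note that every ingredient — the optimal tree $T(c)$, the breakpoints of $w_i(\cdot,c_{-i})$, and hence $p_i$ — is obtained by brute-force enumeration over all confluent routing trees, of which there are finitely (but exponentially) many. This yields a deterministic, exponential-time mechanism that truthfully implements exact workload minimization in the single-dimensional case, establishing the lemma.
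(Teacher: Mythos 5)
Your reduction to single-parameter monotonicity (\`a la Archer--Tardos) and your choice of a lexicographic tie-break among workload-optimal routing trees both coincide with the paper's construction. Where you diverge is in the proof of monotonicity, and that is where a real gap opens up. You propose a local exchange argument: given $T = T(c_i, c_{-i})$ and $T' = T(c'_i, c_{-i})$, swap the portions incident to node $i$ to produce hybrid trees, then compare objective values and tie-break positions. But, as you yourself flag, the hybrids have no reason to be confluent routing trees, and the claim that ``the structured tie-break earns its keep'' in rescuing feasibility is a non-sequitur: the tie-break selects among trees that are already feasible and optimal, it cannot turn an infeasible exchange into a feasible one. As written, the monotonicity step is incomplete, and it is not clear the exchange can be made to work in this domain at all.

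The paper avoids constructing any new trees. Since $a$ and $b$ (the allocations chosen at $c_i$ and $c'_i$ respectively) are each outputs of the mechanism, both are feasible by fiat. The paper sorts the per-node workloads of $a$ and of $b$ in decreasing order, locates the first index $j$ where the sorted vectors differ, and performs a case analysis on the sign of $a_j - b_j$ together with where $c_i k_i$, $c_i k'_i$, $c'_i k_i$, $c'_i k'_i$ sit relative to $a_j$ and $b_j$. The key point is that changing only player $i$'s declared cost moves only player $i$'s workload (linearly), so one can track exactly how the sorted workload vector shifts and show that the assumption $k_i < k'_i$ forces the lexicographically-minimal selection rule to prefer $a$ over $b$ at the profile where $b$ was supposedly chosen (or vice versa), a contradiction. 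To finish a proof along your own lines you would have to actually exhibit the feasible hybrid trees; adopting the paper's direct comparison of the two already-feasible trees sidesteps the obstruction entirely.
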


\begin{proof}
The mechanism $M$ simply goes over all possible route allocations
and outputs the {\it optimal} one with respect  to workload-minimization.
As in~\cite{AT}, our truthful mechanism outputs the
lexicographically-minimal optimal route allocation. Specifically, let
$a$ and $b$ be two distinct optimal route allocations (if {\it two}  such
allocations exist). 
Let $a_1, ..., a_n$ be a \emph{decreasing} order of the
the number of packets  that go through each of different nodes in $a$ 
(using a deterministic  tie-breaking  rule). 
Similarly, let $b_1, ..., b_n$ be \emph{decreasing} order of the 
the number of packets  that go through each of different nodes in $b$. 
Let $j \in [n]$ be the first index such
that $a_j\neq b_j$ or $j=n$ if no such index exists (notice that the sorted order of the coordinates of $a$ might be  identical to the sorted order of the coordinates of $b$,  even if $a \neq b$).
The mechanism will choose $a$ if $a_j<b_j$,
$b$ if $b_j<a_j$, and otherwise according to a
predefined deterministic tie-breaking rule.

We next show the truthfulness of the mechanism. It is well
known that a mechanism is truthful in a single-dimensional setting
such as ours \emph{if and only if} it is weakly-monotone~\cite{AT}. 
Let $a$ be the route allocation that $M$ outputs when the per-packet cost of $i$ is $c_i$, and
the per-packet costs of the other nodes are
$c_{-i}=c_1, ..., c_{i-1}, c_{i+1}, ..., c_n$. 
Let $b$ be the route allocation that $M$ outputs when the per-packet cost of $i$ is $c'_i$,
and the per-packet costs of the other nodes are $c_{-i}$. 
Weak-monotonicity states that if $c_i < c'_i$ then $k_i\ge k'_i$, where
$k_i$ and $k'_i$ are the number of packets that go through $i$ in $a$
and $b$, respectively (and this is true for every node $i$, for every
vector of costs per-packet $c_{-i}$ of the other nodes, and for
every two costs per-packet $c_i\neq c'_i$).

Fix a node $i$. Assume, by contradiction, that there are
 $c_i < c'_i$, and $c_{-i}$ such that $k_i < k'_i$.
 Let $a_1, ..., a_n$ and $b_1, ..., b_n$ be defined as before. 
 Let $c=(c_i, c_{-i}), \ c'=(c'_i, c_{-i}).$
 We shall use the  notation 
 $\mbox{W}(a, c)$ to denote the maximum workload of a player in the route allocation $a$ when the per-packet costs are $c=(c_1, c_2, ..., c_n)$ 
 (that is, $\mbox{W}(a, c)=max_i\ \Sigma_{j\in N^T_i}\ t_j\times c_i(s^a(i))$). 
 Similarly,  $\mbox{W}(b, c)$ denotes the maximum workload of a player in the route allocation $b$ when the  per-packet costs are $c=(c_1, c_2, ..., c_n)$.
The terms  $\mbox{W}(a, c')$ and $\mbox{W}(b, c')$ are defined analogously.  
By the optimality of $M$ it is immediate to verify that:

\begin{equation}
\mbox{W}(a, c) \le \mbox{W}(b, c) \le \mbox{W}(b, c') \le \mbox{W}(a, c')
\label{eqn:WL1} 
\end{equation}

\begin{equation}
 \mbox{W}(a, c') = \max\{ \mbox{W}(a, c),\ k_i c'_i\}
\label{eqn:WL2} 
\end{equation}

\begin{equation}
 k'_i c'_i \le \mbox{W}(b, c') 
\label{eqn:WL3} 
\end{equation}

By~(\ref{eqn:WL2})   it suffices to consider two cases: 

\noindent{\it Case 1:} $\mbox{W}(a, c') = \mbox{W}(a, c)$. 
From~(\ref{eqn:WL1}) we have 
\[
\mbox{W}(a, c) =\mbox{W}(b, c) =\mbox{W}(b, c') =\mbox{W}(a, c').
\] 
 Now, $\mbox{W}(a, c) =\mbox{W}(b, c)$ dictates that 
 $a$ comes before $b$ in the lexicographic order (since $a$ is the  (lexicographically) optimal route allocation at $c$),
 and $\mbox{W}(b, c') =\mbox{W}(a, c')$ dictates that 
 $b$ comes before $a$ in the lexicographic order (since $b$ is the (lexicographically) optimal route allocation at $c'$), a contradiction. 
\vspace{0.12cm} 

\noindent{\it Case 2:} $\mbox{W}(a, c') = k_i c'_i$. 
From~(\ref{eqn:WL1}) and~(\ref{eqn:WL3}) 
we have  $k'_i c'_i \le \mbox{W}(b, c') \le \mbox{W}(a, c') = k_i c'_i$, contradicting 
our assumption that $k'_i > k_i$. 
\end{proof}

\subsection{Approximability of the Multi-Dimensional Case}

Feigenbaum et al.~\cite{FPSS} present a truthful polynomial-time VCG mechanism that
always outputs the \emph{cost-minimizing tree} (a tree that minimizes the
total sum of costs incurred for the packets sent  to $d$). 
We begin our discussion on the multi-dimensional version of the workload
minimization problem by showing that this VCG mechanism obtains an
$n$-approximation for the multi-dimensional version of our problem
(and hence also for the single-dimensional version) in polynomial time.

\begin{theorem}
There is a truthful polynomial-time deterministic
$n$-approximation mechanism for the workload minimization problem
in inter-domain routing.
\end{theorem}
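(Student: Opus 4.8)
We need to show that the VCG cost-minimizing mechanism of Feigenbaum et al. — which is truthful and polynomial-time — achieves an $n$-approximation for the workload-minimization objective $\max_i \Sigma_{j\in N^T_i} t_j \cdot c_i(i, Next^T(i))$. The mechanism itself (outputting a least-cost confluent routing tree to $d$) is already known to be truthful and polynomial-time, so the only thing to establish is the approximation guarantee; truthfulness comes for free from the cited VCG construction.

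**The plan.** Let $T$ be the cost-minimizing routing tree output by the mechanism, and let $T^{*}$ be a workload-minimizing tree, with optimal workload $OPT = \max_i \Sigma_{j\in N^{T^{*}}_i} t_j \cdot c_i(i, Next^{T^{*}}(i))$. First I would observe that the total cost of $T^{*}$ — summing the per-packet link cost over every packet along its path — is exactly $\Sigma_i \Sigma_{j\in N^{T^{*}}_i} t_j \cdot c_i(i, Next^{T^{*}}(i))$, which is a sum of $n$ terms each at most $OPT$, hence $\mathrm{cost}(T^{*}) \le n \cdot OPT$. Since $T$ minimizes total cost, $\mathrm{cost}(T) \le \mathrm{cost}(T^{*}) \le n\cdot OPT$. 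Finally, for any tree the workload of the busiest node is bounded above by the total cost of the tree (the workload at node $i$ is one of the summands of $\mathrm{cost}(T)$, and all summands are nonnegative), so the workload of $T$ is at most $\mathrm{cost}(T) \le n\cdot OPT$. This chain gives the $n$-approximation.

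**Where the care is needed.** The main thing to be careful about is bookkeeping: making sure the total-cost objective of the VCG mechanism really decomposes as the sum over nodes of exactly the quantities that appear in the workload objective, so that the two "sandwich" inequalities (workload $\le$ total cost $\le$ $n\cdot$ workload) are literally valid term-by-term. One should also note that $T$ and $T^{*}$ range over the same feasible set (confluent trees to $d$), so comparing $\mathrm{cost}(T)$ to $\mathrm{cost}(T^{*})$ is legitimate. No hard combinatorial argument is involved — the content is the simple but useful observation that a max of $n$ nonnegative numbers and their sum are within a factor $n$ of each other, and that both the algorithm's objective and the target objective are controlled by that sum. I would close by remarking (as the paper surely does elsewhere) that this bound is essentially tight in the worst case, motivating the lower bounds that follow.
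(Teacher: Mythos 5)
Your proof is correct and follows essentially the same argument as the paper: both rest on the observations that the total cost of a tree equals the sum of per-node workloads (so $C(T^*)=\sum_i W_i(T^*)\le n\cdot W(T^*)$), that $T$ is cost-optimal (so $C(T)\le C(T^*)$), and that the maximum workload is bounded by the total cost. Your write-up is direct where the paper argues by contradiction, and you spell out the closing step $W(T)\le C(T)$ which the paper leaves implicit, but these are expository rather than substantive differences.
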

\begin{proof}
We prove that any mechanism that minimizes the total cost provides
an $n$-approximation to the minimal workload. Hence, the mechanism
of~\cite{FPSS} obtains the required approximation ratio.

Suppose the per packet costs are $c=(c_1, ..., c_n)$. Denote by $T$ the corresponding cost-minimizing routing-tree and by $T^*$ the corresponding workload-minimizing routing-tree. Let $C(T, c)$ and $C(T^*, c)$ be the
total costs of $T$ and $T^*$, respectively. 
Recall that $W(T^*, c)$ is the value of the optimal solution 
for the workload-minimization problem. 
The result now follows
immediately from 
$C(T^*, c) \le n \cdot W(T^*, c) $ and 
 $C(T, c) \ge  W(T^*, c)$.
\end{proof}

\vspace{0.1in}

Unfortunately, it can be shown that any mechanism that minimizes the
total cost (and in particular the mechanism in~\cite{FPSS}) 
cannot obtain a better  approximation ratio.

\begin{theorem}
Any mechanism that minimizes the total cost of the routing tree
cannot achieve an approximation ratio strictly better than $n$ for the
workload minimization problem in inter-domain routing.
\end{theorem}
\begin{proof}
Recall the routing instance in Figure $1$. Observe, that any total-cost minimizing
mechanism would choose the routing tree in which both $y$ and $z$
send packets through $x$, and $x$ forwards packets directly to $d$.
This means that the workload on $x$ is $3$. However, if all nodes
chose to send their packets directly to $d$ we would reach a
maximum workload of $1+\epsilon$. 
Clearly, the example in Figure $1$ can be
generalized to $n$ source nodes.
Notice also that a similar example can be used to show that 
the theorem  holds for singe-dimensional problems. The idea is to replace the link  $(y, d)$ 
with the links $(y, y'), (y', d)$ and the link $(z, d)$ with $(z, z'), (z', d)$
while  $c_y(y, y')=c_z(z, z')=0$, and $c_{y'}=c_{z'}=1+\epsilon$ 
assuming $x, y, z$ have  each a single packet to send, and $y', z'$ 
 have no packets  to send to the destination.
\end{proof}

\vspace{0.1in}

Therefore, there exists a trade-off between the goal of minimizing the
total-cost and the goal of minimizing the workload. It would be
interesting to construct a truthful mechanism that optimizes (or at
least closely approximates) the minimal workload. 
We present two
negative results for this problem in  Appendix~\ref{LBs-Inter-Domain}. 
In a follow-up work, Gamzu~\cite{Gamzu} 
improved our truthful lower bound for minimizing the workload in
inter-domain routing from $\frac{1 + \sqrt{5}}{2} \approx 1.618$ to $2$, and our universally-truthful randomized  lower bound from $\frac{3 + \sqrt{5}}{4} \approx 1.309$ to $2$.

\section{Non-Utilitarian Fairness}\label{sec-fairness}

Utilitarian functions represent the {\it overall} satisfaction  of the
players, as they maximize the sum of players' values. This notion of
\emph{fairness} is but one of several that have been considered
(explicitly and implicitly) in mathematical, economic and
computational literature. A well known example of \emph{non-utilitarian}
fairness is the \emph{cake-cutting problem}, presented by the Polish school
of mathematicians in the 1950's 
(Steinhaus, Banach, Knaster~\cite{Stein}). Fair allocations of
indivisible items  have also
been studied~\cite{LMN,LMSS} (these can be regarded as discrete
versions of the cake-cutting problem).

In this section, we discuss three general notions of {\it non-utilitarian}
fairness -- Max-Min fairness, Min-Max fairness, and
envy-minimization. We show that Max-Min fairness is inapproximable within \emph{any
ratio}, even for extremely restricted special cases. In sharp contrast, we
show that Min-Max fairness (which is a generalization of both the
scheduling and workload-minimization problems considered in this
paper) can always be truthfully approximated via a simple VCG
mechanism. Finally, we make use of our technique to prove a lower
bound for the envy-minimization problem.

\subsection{The Setting}
In this setting we have  $m$ indivisible items and  $n$ strategic players. 
Each player $i$ is defined by a
private valuation function $v_i:2^{[m]}\rightarrow \mathbb{R_{+}}$. 
We assume that $v_i(\emptyset)=0$ (free disposal),  and $v_i(S)\le v_i(T)$ (monotonicity) for every  $i$ and every 
two sets of items $S,T\subseteq [m]$ such that $S\subseteq T.$ 

Each player  wishes to maximize the value of the 
set of items  assigned to it minus its payment to the mechanism.
If the mechanism is truthful then player $i$ can never improve its utility
by misreporting his private valuation function $v_i$ to the mechanism (no matter what the other players do). 
The set of alternatives $A$ contains all  possible  allocations of items to the players, where
all items must be assigned, and each item is assigned to exactly one player. 
The global goal of each of our three  notions of {\it non-utilitarian}
fairness -- Max-Min fairness, Min-Max fairness, and
envy-minimization -- will be defined below. 

\subsection{Max-Min Fairness}

\paragraph*{The global goal.} The Max-Min social choice function is concerned with
maximizing the value  of the least satisfied player. Formally, for
every $n$-tuple of $v_i$ valuations the Max-Min function assigns
the alternative $a\in A$ that maximizes the expression
 $\min_i \: v_i(a)$.

The Max-Min fairness allocation problem is 
based on the philosophical work of John Rawls~\cite{Rawls}.
Lavi et al.~\cite{LMN} proved that Max-Min fairness in allocations of
indivisible items cannot be \emph{optimally} implemented in a
truthful manner. 
Non-truthful algorithms for this
problem were designed~\cite{BS06}, as well as algorithms that settle
for restricted notions of truthfulness~\cite{BV05,Golo}. We prove
that no truthful deterministic mechanism can obtain \emph{any}
approximation ratio to the Max-Min fairness value. We prove this
lower bound even for the case of $2$ players and $2$ items.

\begin{theorem}
No truthful deterministic mechanism can obtain \emph{any}
approximation to the Max-Min fairness value in the allocation of
indivisible items. 
This holds even for the case of $2$ players and $2$ items.
\end{theorem}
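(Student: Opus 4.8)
The plan is to re-use the weak monotonicity method of Lemma~\ref{monotonicity-lemma} --- in its value-maximizing direction, since in combinatorial auctions each player maximizes the value of the bundle it receives. I will show that for \emph{every} target ratio $\rho\ge 1$ one can write down two valuation profiles that differ in a single player's report, such that any $\rho$-approximate mechanism is forced on these two profiles to produce allocations that contradict weak monotonicity. The conceptual trick is to make one player's valuation strongly \emph{asymmetric}, so that in the $\min$ defining the Max-Min objective this player is the bottleneck on one item but essentially irrelevant on the other; this caps the other player's contribution from that item, decouples the mechanism's choice from that player's reported value for it, and produces exactly the behaviour weak monotonicity forbids.

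Concretely, fix $\rho\ge 1$ and a constant $H>\rho+1$, and take two players and two items $\{1,2\}$. I would let player $2$'s valuation $w$ be $w(\{1\})=H$, $w(\{2\})=1$, $w(\{1,2\})=H+1$, and consider two valuations for player $1$: $v$ with $v(\{1\})=1$, $v(\{2\})=0$, $v(\{1,2\})=1$, and $v'$ with $v'(\{1\})=\rho+3$, $v'(\{2\})=\rho+1$, $v'(\{1,2\})=2\rho+4$. All three are monotone and satisfy free disposal. The two allocations that give both items to one player always have Max-Min value $0$, so only the values on singletons matter.

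The first step is to pin down the mechanism on the two profiles. On $(v,w)$, giving item $1$ to player $1$ yields Max-Min $\min(1,1)=1$ while every other allocation yields $0$; hence the optimum is $1$ and any $\rho$-approximate mechanism $M$ must give item $1$ to player $1$. On $(v',w)$, giving item $1$ to player $1$ yields $\min(\rho+3,1)=1$ while giving item $2$ to player $1$ yields $\min(\rho+1,H)=\rho+1$; hence the optimum is $\rho+1$, and since the three other allocations all have Max-Min $\le 1<(\rho+1)/\rho$, any $\rho$-approximate $M$ must give item $2$ to player $1$. The second step applies Lemma~\ref{monotonicity-lemma} to player $1$, whose report changes from $v$ to $v'$ while player $2$ keeps the report $w$: if $a=M(v,w)$ and $b=M(v',w)$, weak monotonicity requires $v(a)+v'(b)\ge v'(a)+v(b)$, which, since player $1$ receives item $1$ in $a$ and item $2$ in $b$, reads $v(\{1\})+v'(\{2\})\ge v'(\{1\})+v(\{2\})$, i.e.\ $1+(\rho+1)\ge(\rho+3)+0$, i.e.\ $0\ge 1$, a contradiction. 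Thus $M$ is not weakly monotone, so by Lemma~\ref{monotonicity-lemma} not truthful; as $\rho$ was arbitrary, no truthful deterministic mechanism achieves any finite approximation ratio, even for two players and two items.

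The step I expect to be the real obstacle is choosing the asymmetric profile of the second paragraph: naive attempts --- additive or roughly symmetric valuations for both players, or single-minded valuations that merely rotate which item a player wants --- all produce allocation switches that are perfectly consistent with weak monotonicity and hence yield no contradiction. What makes the argument go through is precisely that $w(\{2\})$ is tiny relative to $w(\{1\})$: this caps at $1$ the benefit of handing item $1$ to player $1$ no matter how large $v(\{1\})$ is, so the approximation requirement forces $M$ to hand player $1$ item $2$ once $v(\{2\})$ crosses a threshold, even after $v(\{1\})$ has been pushed far above $v(\{2\})$ --- and that is the weak-monotonicity violation. Checking that the chosen numbers simultaneously make both profiles forced and make the weak-monotonicity inequality fail is the only computation, and it is routine.
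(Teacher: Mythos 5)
Your proof is correct and takes essentially the same approach as the paper: both construct a pair of two-player, two-item instances with additive valuations, use the approximation requirement to pin down the mechanism's output on each, and then derive a contradiction with the weak monotonicity property of Lemma~\ref{monotonicity-lemma}. The only cosmetic difference is that you vary player $1$'s report while the paper varies player $2$'s.
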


\begin{proof}
Let $c > 1$ and let  $\epsilon$ be an arbitrarily small positive real number. 
Consider an instance with two players $i=1, 2$ and two
goods $g_a, g_b$. Each player $i$ has an additive valuation
function.  Let 
\[
\begin{array}{ll}
v_1(g_a)=2, & v_1(g_b)=\frac{1}{c} \\ 
v_2(g_a)=4-\epsilon, & v_2(g_b)=1+\epsilon.
\end{array} 
\]

 Note, that the optimal allocation assigns $g_a$
to player $1$ and $g_b$ to player $2$, thus obtaining a Max-Min value
of $1+\epsilon$. Also note, that this allocation will also be
chosen by any $c$-approximation mechanism.

We alter the valuation of player $2$ into $v'_2$ such that
$v'_2(g_a)=\frac{1}{c}, \ v'_2(g_b)=\frac{1}{c^2}-\epsilon$. 
The optimal Max-Min value is now $\frac{1}{c}$. Observe that any
$c$-approximation mechanism must assign $g_b$ to player $1$ and
$g_a$ to player $2$. However, if this happens we have that:

\[
(1+\epsilon)+\frac{1}{c}=v_2(g_b)+v'_2(g_a)<v_2(g_a)+v'_2(g_b)=(4-\epsilon)+\frac{1}{c^2} -\epsilon.
\]

This violates weak-monotonicity, and so no truthful
$c$-approximation mechanism exists. Since this is true for any $c > 1$
the theorem follows.
\end{proof}

\subsection{Min-Max Fairness}

\paragraph*{The global goal.}  Min-Max fairness can be thought of as the dual notion of Max-Min
fairness. It is relevant in settings in which each player incurs a
cost for every chosen alternative.~\footnote{Here each player wishes 
to minimize the cost of the 
set of items  assigned to it minus the payment made to it by the mechanism.}
 The Min-Max social choice
function is concerned with minimizing the cost incurred by the least
satisfied player. Formally, for every $n$-tuple of $v_i$ valuations
the Max-Min function assigns the alternative $a\in A$ that minimizes the
expression $\max_i \: v_i(a)$. 

Observe, that both the scheduling problem and the
workload-minimization problem discussed in this paper, are in fact
special cases of this notion of fairness. Studying Max-Min fairness
in this more abstract setting enables us to state this simple
observation -- any Min-Max social-choice function can be truthfully
approximated within a factor of $n$ (recall that $n$ is the number
of players) by a simple VCG mechanism. Since the best currently known
approximation-mechanisms for both scheduling and
workload-minimization are VCG-based, this result can be viewed as a
generalization of both.

\begin{theorem}
Let $f$ be a Min-Max social choice function. Then, there exists a
truthful deterministic mechanism that for every $n$-tuple of
valuations $v_1, ..., v_n$ outputs an alternative $a\in A$
 such that $\max_i\: v_i(a)$ is an $n$-approximation to the value of the solution $f$
outputs for these valuations.
\end{theorem}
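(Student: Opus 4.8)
The plan is to show that the utilitarian (sum) objective sandwiches the Min-Max objective, and to implement the former truthfully by a VCG mechanism. Concretely, let $M$ be the mechanism that on input $(v_1,\dots,v_n)$ outputs an alternative $a^*$ minimizing $\Sigma_i\ v_i(a)$ over $a\in A$ (ties broken by a fixed deterministic rule), and charges each player the VCG payment associated with the welfare-maximization problem in which player $i$'s valuation is $-v_i$. Since this is an affine-maximizer objective, $M$ is truthful by the standard properties of VCG mechanisms~\cite{Vic61,Clarke,Gro73}, with no assumption on its running time.

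It remains to bound the approximation ratio. Let $a^{opt}$ be the alternative that $f$ outputs, so that the value of $f$'s solution is $OPT:=\max_i v_i(a^{opt})=\min_{a\in A}\max_i v_i(a)$. Using the non-negativity of the valuation (cost) functions -- which holds in all the Min-Max settings considered here, in particular scheduling and workload minimization -- we have
\[
\max_i v_i(a^*)\ \le\ \Sigma_i\ v_i(a^*)\ \le\ \Sigma_i\ v_i(a^{opt})\ \le\ n\cdot\max_i v_i(a^{opt})\ =\ n\cdot OPT ,
\]
where the first inequality is non-negativity, the second is the optimality of $a^*$ for the sum objective, and the third is trivial. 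Hence $M$ is a truthful deterministic $n$-approximation mechanism for $f$.

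The argument is essentially the one already used for the routing problem (that the cost-minimizing tree $n$-approximates the workload-minimizing tree); the only point that needs care is the twin observation that the sum objective is at least the largest coordinate and at most $n$ times the largest coordinate, combined with the fact that a sum (more generally, an affine-maximizer) objective is precisely what a VCG mechanism implements truthfully. I do not expect any genuine technical obstacle beyond keeping the non-negativity assumption explicit.
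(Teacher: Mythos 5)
Your proof is correct and matches the paper's argument essentially verbatim: both implement cost-minimization (the sum objective) via VCG, invoke VCG truthfulness, and use the sandwich $\max_i v_i(a)\le \Sigma_i v_i(a)$ and $\Sigma_i v_i(b)\le n\cdot\max_i v_i(b)$ together with the optimality of the VCG outcome for the sum. The paper phrases the final step as a proof by contradiction while you write the chain of inequalities directly, and you are slightly more careful in flagging the non-negativity assumption needed for $\max_i v_i(a)\le\Sigma_i v_i(a)$, but these are cosmetic differences, not a different route.
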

\begin{proof}
Let $v_1, ..., v_n$ be the valuation function of the players. 
Let $b$ be the allocation that $f$
outputs for $v_1, ...,v_n$. Consider the VCG mechanism that minimizes
the total cost the players incur. The truthfulness of this mechanism
is guaranteed by the VCG technique. Let $a\in A$ be the allocation that this
mechanism outputs.
The result now follows
immediately from 
$\max_i \: v_i(a)  \le \Sigma_i \ v_i(a) $ and the fact that 
 $\max_i \: v_i(b)  \ge \frac{1}{n}\cdot  \Sigma_i \ v_i(a)$.

\end{proof}

\subsection{Envy-Minimization}

\paragraph*{The global goal.}  Lipton, Markakis, Mossel, and Saberi~\cite{LMSS} presented the
problem of envy-minimization for indivisible items.  They consider 
truthful mechanisms for this problem. 
An {\it envy-minimizing allocation} of items is a partition of the $m$
items into disjoint sets $S_1, ..., S_n$ (player $i$ is assigned $S_i$)
that minimizes the expression $\max_{i,j}\ \{ v_i(S_j) - v_i(S_i), 0 \}$ (over
all possible allocations). Intuitively, we wish to minimize the
maximal envy a player might feel by comparing his value for a set
of items given to another player to the value he assigns the items
allocated to him.  

They prove several approximability results
for this problem. The parameter considered in~\cite{LMSS} is the
\emph{maximal marginal utility}.

\begin{definition}
The $\mathrm{maximal \; marginal \; utility}$ $\alpha$ is defined as follows:
\[
\alpha(v_1, ..., v_n) =\max_{i\in [n],\  j\in [m],\  S\subseteq [m]}\ v_i(S\cup
\{j\})-v_i(S).
\]
\end{definition}

That is, $\alpha$ is the maximal value by which the value of a
player increases when one item is added to his bundle. 

Lipton et al.~\cite{LMSS}  exhibit a universally-truthful randomized mechanism that has an  envy of at most  
O$(\sqrt{\alpha}n^{\frac{1}{2}+\epsilon})$ w.h.p. for large values
of $n$. They show that no truthful mechanism can guarantee an
optimal solution with respect to  envy minimization. 
We strengthen this lower bound by
showing that no truthful deterministic mechanism can guarantee an
allocation that has an envy value within $\alpha$ from optimal.

\begin{theorem}\label{envy-thm}
Any truthful deterministic  mechanism cannot obtain an approximation ratio better
than $\alpha$ for envy minimization.
\end{theorem}

\begin{proof}
Assume for contradiction that there exists  a truthful deterministic mechanism $M$ 
that $\alpha$-approximates the envy minimization problem. 
Consider an instance with $2$ players and $3$ items. Each player
$i=1, 2$ has the same additive valuation function $v_i$ that assigns
any of the single items a value of $1$. Observe, that 
$\alpha(v_1, v_2) =1$. Notice, that the minimal envy 
for this instance is $1$.~\footnote{The {\it envy} of $S_1, ..., S_n$ equals 
 $\max_{i,j}\ \{ v_i(S_j) - v_i(S_i), 0 \}$.} 
Hence, if $M$ assigns all items to one of the players the
envy is precisely $3 > 1= \alpha(v_1, v_2) \cdot 1$. 
Therefore, we can assume w.l.o.g. that the 
 $\alpha$-approximation mechanism $M$
allocates items $1, 2$ to player $1$ 
and  item $3$ to player $2$ (call this partition $(S_1, S_2)$). 

Let  $\epsilon$ be an arbitrarily small positive real number.  We now
change the valuation function of player $1$ into the following
additive valuation:

\[
v'_1(j)=\left\{%
\begin{array}{ll}
1+\epsilon & j=1, 2 \\
\epsilon  & j=3. \\
\end{array}%
\right.
\]

Now $\alpha(v'_1, v_2) = 1+\epsilon$. Also observe that the minimal
envy for this new instance is $\epsilon$ (e.g., assign item $1$ to player $1$
and items $2, 3$ to player $2$). However, it is easy to 
verify that  weak-monotonicity dictates that the allocation
remains the same even after the alteration of the valuation of player
$1$ (since for any partition $(T_1, T_2)\neq (S_1, S_2)$ of the items 
we have that $v'_1(T_1)- v_1 (T_1) < v'_1(S_1) - v_1 (S_1) = 2\epsilon$).
Therefore, we end up with an allocation in which the envy is
 $1 > (1+\epsilon)\cdot \epsilon=\alpha(v'_1, v_2)\cdot \epsilon$, a contradiction.  
\end{proof}

\section*{Acknowledgements}
We thank   Ron Lavi, Noam Nisan, Chaitanya Swamy, Amir Ronen and anonymous referees for
helpful discussions and suggestions.

\bibliography{bib}

\appendix

\section{Appendix}
\subsection{A Universally-Truthful Randomized Approximation Mechanism for the
Scheduling Problem with Unrelated Machines} \label{upper-bound}

Nisan and Ronen~\cite{NR} present a truthful deterministic mechanism
that obtains an $n$-approximation. For the case of $2$ machines,
they exhibit a universally-truthful randomized mechanism that
obtains an approximation of $\frac{7}{4}$ (this bound was later improved  to 1.58606 
by~\cite{Chen}).  
We generalize this result
by presenting a universally-truthful randomized mechanism that
obtains an approximation-ratio of $\frac{7n}{8}$.
We now turn to the description of our mechanism for $n$ machines:

\vspace{0.1in} \noindent \textbf{Input:} An $n$-tuple of valuations $(v_1, ..., v_n)$ and $m$ tasks.

\vspace{0.1in} \noindent \textbf{Output:} An allocation
$T=T_1$, ..., $T_n$ of tasks, and payments $p_1, ..., p_n$ such that $T$
has a makespan value which is a $\frac{7n}{8}$-approximation to the
optimal makespan value, and the payments induce truthfulness.

\vspace{0.1in} \noindent {\bf The Mechanism:} \vspace{0.1in}
\begin{enumerate}
\item For every machine $i$ let $T_i\leftarrow \emptyset$ and $p_i\leftarrow 0$.
\item Partition the set of machines into two sets $S_1\leftarrow \{1, ..., \frac{n}{2}\}$ and
$S_2\leftarrow\{\frac{n}{2}+1, ..., n\}$.
\item For each task
$j=1, ..., m$ perform the following actions (assuming that $\min$ and $\mbox{argmin}$ 
break ties arbitrarily):

\begin{itemize}
\item Let $v^1\leftarrow\min_{i\in S_1}\ v_i(j)$, and let $i^1\leftarrow\mbox{argmin}_{i\in S_1}\ v_i(j)$.
\item Let $v'^1\leftarrow\min_{i\in S_1\setminus \{i^1\}}\ v_i(j)$.
\item Let $v^2\leftarrow\min_{i\in S_2}\ v_i(j)$,  and let $i^2\leftarrow\mbox{argmin}_{i\in S_2}\ v_i(j)$.
\item Let $v'^2\leftarrow\min_{i\in S_2\setminus \{i^2\}}\ v_i(j)$.
\item Randomly and uniformly choose a value $R\in \{0, 1\}$.
\item If $R=0$ and $v^1\le \frac{4}{3}v^2$ 
set $T_{i^1}\leftarrow T_{i^1} \bigcup\; \{ j \}$ and set $p_{i^1}\leftarrow p_{i^1}+\min\{v'^1,\frac{4}{3}v^2\}$.
\item If $R=0$ and $v^1> \frac{4}{3}v^2$ 
set $T_{i^2}\leftarrow T_{i^2}\bigcup\; \{ j \}$ and set $p_{i^2}\leftarrow p_{i^2}+\min\{v'^2,\frac{3}{4}v^1\}$.
\item If $R=1$ and $v^2\le \frac{4}{3}v^1$ 
set $T_{i^2}\leftarrow T_{i^2}\bigcup\; \{ j \}$ and set $p_{i^2}\leftarrow p_{i^2}+\min\{v'^2,\frac{4}{3}v^1\}$.
\item If $R=1$ and $v^2> \frac{4}{3}v^1$ 
set $T_{i^1}\leftarrow T_{i^1}\bigcup\;\{ j \}$ and set $p_{i^1}\leftarrow p_{i^1}+\min\{v'^1,\frac{3}{4}v^2\}$.
\end{itemize}
\item Allocate each machine $i$ the tasks in $T_i$, and pay it a sum of $p_i$.
\end{enumerate}

\begin{remark}
If $n$ cannot be divided by $2$ simply add the extra machine to either $S_1$ or $S_2$.
\end{remark}

\begin{theorem}
There exists a universally-truthful randomized mechanism for the
scheduling problem that obtains an approximation ratio of $\frac{7n}{8}$.
\end{theorem}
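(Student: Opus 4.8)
The mechanism is already described (the five enumerated steps), so the proof has two essentially independent parts: verifying universal truthfulness, and verifying the approximation ratio via a reduction to the two-machine case.

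\emph{Universal truthfulness.} Fix an arbitrary realization of the coins $R_1,\dots,R_n\in\{0,1\}$ used for the $n$ tasks; this yields a deterministic mechanism $M_{\vec R}$, and it suffices to show each $M_{\vec R}$ is truthful. Since valuations are additive, a machine's total utility under $M_{\vec R}$ is the sum over tasks of its per-task (payment $-$ cost), and the allocation and payment for task $j$ depend only on the reports for task $j$ and on $R_j$. Fix a task $j$ and a machine $i$, and hold the other machines' reports for task $j$ fixed. If $i$ is not the cheapest machine in its own group $S_k$ then $i$ never receives $j$ and is charged $0$. If $i$ \emph{is} the cheapest in $S_k$, then inspecting the four branches of the per-task rule shows that $i$ receives $j$ precisely when its reported cost $v_i(j)$ lies below a threshold $\theta$ determined solely by the other reports (the second-smallest cost in $S_k$, or $\tfrac43$, respectively $\tfrac34$, times the minimum cost in the other group, whichever binds), and in that case $i$ is charged exactly $\theta$ (and $\theta\ge v_i(j)$, so individual rationality holds too). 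This is a monotone single-parameter allocation together with its critical-value payment, hence truthful for task $j$; summing over tasks, $M_{\vec R}$ is truthful. Therefore the randomized mechanism is a distribution over truthful deterministic mechanisms, i.e.\ universally truthful.

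\emph{The approximation ratio, via a reduction to two machines.} Regard each group $S_k$ ($k=1,2$) as a single \emph{virtual machine} whose cost for task $j$ is $v^k_j:=\min_{i\in S_k}v_i(j)$. On these virtual costs the per-task rule of our mechanism is \emph{exactly} the two-machine case of the mechanism (Nisan and Ronen's $\tfrac43$-threshold biased mechanism): for each task it selects one virtual machine and routes the task to the cheapest real machine inside that group. Consequently the \emph{real} total load carried by group $S_k$ equals the \emph{virtual} load carried by virtual machine $k$, namely $L^{S_k}=\sum_{j\text{ routed to }S_k}v^k_j$, and the quantity $\max(L^{S_1},L^{S_2})$ is precisely the makespan produced by the virtual two-machine schedule. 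Invoking the known $\tfrac74$ guarantee of the two-machine mechanism (stated above), $E[\max(L^{S_1},L^{S_2})]\le \tfrac74\cdot W$, where $W$ denotes the optimal makespan of the virtual two-machine instance.

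\emph{Relating the two optima and concluding.} Let $A^*$ be an optimal real schedule, with makespan $\mathrm{OPT}$, and route task $j$ to the virtual machine corresponding to the group that contains $A^*$'s machine for $j$; since $v^k_j\le v_{A^*(j)}(j)$ whenever $A^*(j)\in S_k$, the virtual load of machine $k$ under this routing is at most $\sum_{i\in S_k}\sum_{j\in A^*_i}v_i(j)\le |S_k|\cdot\mathrm{OPT}=\tfrac m2\,\mathrm{OPT}$, hence $W\le\tfrac m2\,\mathrm{OPT}$. Finally, every real machine's load is at most the total load of the group containing it, so the real makespan $\max_a\mathrm{load}(a)\le\max(L^{S_1},L^{S_2})$. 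Combining, $E[\text{makespan}]\le E[\max(L^{S_1},L^{S_2})]\le\tfrac74\cdot\tfrac m2\,\mathrm{OPT}=\tfrac{7m}{8}\,\mathrm{OPT}$; odd $m$ is handled by the remark (put the spare machine in either group). The one substantive input is the two-machine $\tfrac74$-analysis, which is Nisan and Ronen's; what is new here is only the virtual-machine reduction, whose two nontrivial checks are that the group-level rule coincides with the two-machine mechanism (so that ``real group load $=$ virtual machine load'') and the easy inequality $W\le\tfrac m2\,\mathrm{OPT}$. The step that introduces slack — bounding a single machine's load by its whole group's load — is also the place where a more careful argument could in principle improve on $\tfrac{7m}{8}$.
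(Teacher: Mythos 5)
Your proposal is correct and takes essentially the same route as the paper: both decompose truthfulness task-by-task into per-task threshold/weighted-VCG rules, and both prove the ratio by comparing against the induced two-machine "virtual" instance with costs $\min_{i\in S_k} v_i(j)$, applying the $\tfrac74$ two-machine guarantee, and showing the virtual optimum is at most $\tfrac m2$ times the real optimum. The only cosmetic difference is that you phrase the per-task truthfulness via monotone single-parameter allocation with critical-value payments rather than the paper's "weighted VCG" terminology, and you spell out the inequality $\text{real makespan} \le \max(L^{S_1},L^{S_2})$ a bit more explicitly than the paper's compressed $M(A)\le M(B)$ step.
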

\begin{proof}
We prove the theorem for the case that $n$ can be divided by $2$.
The proof for the other case is similar. Our proof relies on the
proof of Nisan and Ronen~\cite{NR}. Observe, that the utility of
each machine after the algorithm finishes is the sum of its
utilities for the different tasks. Hence, it is sufficient to prove
that for each individual task a machine has no incentive to lie. As
in~\cite{NR}, this is guaranteed because the allocation of each task
is in fact a weighted VCG mechanism (see~\cite{NR} for further
explanations), which is known to be truthful. Hence, this mechanism
is universally truthful.

We now need to prove that the approximation ratio guaranteed by the
mechanism is indeed  $\frac{7n}{8}$. Let $\mathcal{I}$ be an instance of the
scheduling problem with $m$ tasks, and with $n$ machines that have
the valuation functions $v_1, ..., v_n$. We define an instance $\mathcal{I}'$ of
scheduling problem with $m$ tasks, and with $2$ machines that have
the valuation function $v'_1,v'_2$, in the following way:  $v'_1(j)=min_{i\in S_1}\ v_i(j)$ for all
$j\in [m]$. Similarly,  $v'_2(j)=min_{i\in S_2}\ v_i(j)$ for all
$j\in [m]$. We denote by $M(\mathcal{I})$ and
by $M(\mathcal{I}')$ the makespan values our mechanism generates for $\mathcal{I}$ and
$\mathcal{I}'$ respectively. We denote by $OPT(\mathcal{I})$ and by $OPT(\mathcal{I}')$ the optimal
makespan values for $\mathcal{I}$ and $\mathcal{I}'$ respectively.

First, notice that $M(\mathcal{I})\le M(\mathcal{I}')$. This is because applying our
mechanism to $\mathcal{I}'$ results in the same makespan value as applying it
to $\mathcal{I}$ in the worst-case scenario in which tasks are always assigned
to the same machines in $S_1$ and in $S_2$. It also holds that
$M(\mathcal{I}')\le \frac{7}{4}OPT(\mathcal{I}')$ because in the case that there are only
two machines our mechanism is precisely that of~\cite{NR}, which
guarantees a $\frac{7}{4}$ approximation ratio. We now have that
$M(\mathcal{I})\le \frac{7}{4}OPT(\mathcal{I}')$. All that is left to show is that
$OPT(\mathcal{I}')\le \frac{n}{2}OPT(\mathcal{I})$. Consider the optimal allocation of tasks
for $\mathcal{I}$. By giving all tasks assigned to machines in $S_1$ to
machine $1$ in $\mathcal{I}'$, and allocating all tasks assigned to machines in
$S_2$ to machine $2$ in $\mathcal{I}'$, we end up with a makespan value for $\mathcal{I}'$
that is at most $\frac{n}{2}OPT(\mathcal{I})$. The theorem follows.
\end{proof}

\subsection{Lower Bounds for  minimizing the workload in
inter-domain routing}\label{LBs-Inter-Domain}

\begin{theorem}\label{thm-work}
No truthful deterministic mechanism for minimizing the workload in
inter-domain routing can obtain an approximation ratio better than
$\frac{1 + \sqrt{5}}{2} \approx 1.618$.
\end{theorem}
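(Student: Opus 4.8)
The plan is to adapt the weak-monotonicity technique used for Theorems~\ref{deterministic-bound} and~\ref{universal-bound} to the workload-minimization setting, working with a routing gadget that refines the instance of Figure~\ref{fig1}. I would keep the destination $d$ and a small fixed number of source nodes, among them a ``hub'' $x$ that routes straight to $d$, and give each remaining source node two alternatives: an (almost) free route that passes through $x$ --- which adds its packet to $x$'s load --- and a direct route to $d$ of cost about $1$. The makespan of a tree is then governed by $x$'s per-packet cost times $(1+k)$, where $k$ is the number of sources routed through the hub, versus the direct cost of whichever leaf is not routed through $x$; this creates a real tension between unburdening the leaves and overloading the hub, and the optimal tree depends sharply on the cost parameters.

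I would then single out one player $p$ and two cost profiles $v_p,v'_p$ differing only in $p$'s coordinate, so that (i) under $v_p$ the mechanism faces a ``free'' choice among several symmetric near-optimal trees, hence without loss of generality I may fix which one it outputs, exactly as the route of the free unit of traffic is fixed in the spirit of the proof of Theorem~\ref{deterministic-bound}; and (ii) under $v'_p$ the only routing tree whose ratio is below the target is the one that reverses that choice. Weak monotonicity for $p$ (Lemma~\ref{monotonicity-lemma}, with the inequality reversed for cost minimization) forbids the reversal, so on the $v'_p$-instance the mechanism is trapped with a tree of ratio at least the target. Tuning the numerical costs so that, on that instance, the two competing makespans balance at the value $r$ with $r = 1 + 1/r$ pushes the bound to $r = \tfrac{1+\sqrt5}{2}$, with the customary $\epsilon$'s handling strictness.

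The hard part will be making weak monotonicity actually bite. Unlike the scheduling objective, weak monotonicity here is a rather natural monotonicity that any sensible mechanism obeys: lowering a node's cost and consequently routing \emph{more} traffic through it is exactly optimal behavior and yields no contradiction. So the gadget must be engineered so that the \emph{optimal} response to perturbing $p$'s cost is \emph{anti}-monotone in $p$'s load --- which is precisely why a symmetric free-choice base instance (as in Theorem~\ref{deterministic-bound}) together with an asymmetric perturbation of $p$ is needed, and also why the best constant one can squeeze out is the golden ratio rather than $2$. Confirming that such a gadget exists and that the makespan balancing yields the quadratic $r^2=r+1$ (rather than a weaker linear bound), as opposed to an iterative degradation argument in the spirit of Theorem~\ref{SMON-bound}, is the real content; comparing the finitely many explicit routing trees afterwards is routine.
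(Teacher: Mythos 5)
Your high-level plan---reduce to weak monotonicity for one player, build a small gadget where perturbing that player's costs flips the optimal routing tree, and tune the constants so the governing equation is $r^2 = r+1$---is the right strategy, and you have correctly identified both the golden-ratio equation and the central difficulty: weak monotonicity in routing looks like a monotonicity that any sensible mechanism obeys, so the gadget must force the optimal response to be \emph{anti}-monotone.

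However, the specific gadget you propose cannot deliver this. In a single-hub star topology, where leaf $p$ chooses between ``via $x$'' (cost $c_p(p,x)$) and ``direct'' (cost $c_p(p,d)$), the weak-monotonicity constraint for $p$ is exactly $c_p(p,x)-c_p(p,d)\le c'_p(p,x)-c'_p(p,d)$ when the mechanism switches $p$ from via-$x$ to direct. That is: if $p$'s relative advantage for going through $x$ decreases, the mechanism may (and should) reroute $p$ directly. This is precisely the direction in which optimality moves the answer, so the inequality can never be contradicted by an approximately optimal mechanism, no matter how many other leaves you add and no matter how you load the hub---the other leaves' routes do not enter $p$'s weak-monotonicity inequality at all. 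Perturbing the hub $x$ itself fails for the same reason: the weak-monotonicity condition for $x$ reduces to ``raising $x$'s per-packet cost should not increase traffic through $x$,'' which again aligns with optimality.

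The paper's gadget escapes this trap by giving the perturbed node \emph{two distinct downstream hubs} rather than one. Concretely: three source nodes $I,II,III$ and destination $d$, with $I$ having edges to both $II$ and $III$, each of which has a single edge to $d$. Perturbing $I$'s cost vector $\bigl(c_I(I,II),c_I(I,III)\bigr)$ from $(1,0)$ to $\bigl(r^2-\epsilon,\,r\bigr)$ with $r=\tfrac{1+\sqrt5}{2}$ flips the unique near-optimal route from via-$II$ to via-$III$, because the bottleneck shifts from $I$'s own edge cost to $III$'s induced workload. Yet $I$'s cost \emph{difference} between the two routes is essentially unchanged ($1$ versus $1-\epsilon$), which is exactly what weak monotonicity forbids. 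Note also that, unlike Theorem~\ref{deterministic-bound}, there is no ``symmetric free choice'' and no w.l.o.g.\ step here: both instances have unique near-optimal trees, and weak monotonicity directly rules out the pair. So while your strategic diagnosis is sound, the star gadget is a dead end; you need (at least) two competing downstream hubs for the perturbed node, which is the real content of the construction.
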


\begin{figure}[hbt]
\begin{center}
\setlength{\unitlength}{0.00052493in}
\begingroup\makeatletter\ifx\SetFigFont\undefined%
\gdef\SetFigFont#1#2#3#4#5{%
  \reset@font\fontsize{#1}{#2pt}%
  \fontfamily{#3}\fontseries{#4}\fontshape{#5}%
  \selectfont}%
\fi\endgroup%
{\renewcommand{\dashlinestretch}{30}
\begin{picture}(3393,2460)(0,-10)
\path(3169,1223)(1684,323)
\blacken\thicklines
\path(1827.440,462.553)(1684.000,323.000)(1874.087,385.585)(1827.440,462.553)
\path(3165,1230)(1680,330)
\blacken\thinlines
\path(1767.075,417.852)(1680.000,330.000)(1798.173,366.540)(1767.075,417.852)
\thicklines
\path(1702,351)(217,1251)
\blacken\path(407.087,1188.415)(217.000,1251.000)(360.440,1111.447)(407.087,1188.415)
\path(3206,1238)(1721,2138)
\blacken\path(1911.087,2075.415)(1721.000,2138.000)(1864.440,1998.447)(1911.087,2075.415)
\path(1722,2134)(237,1234)
\blacken\path(380.440,1373.553)(237.000,1234.000)(427.087,1296.585)(380.440,1373.553)
\put(2535,1680){\makebox(0,0)[lb]{\smash{{\SetFigFont{10}{12.0}{\rmdefault}{\mddefault}{\updefault}1}}}}
\put(465,1635){\makebox(0,0)[lb]{\smash{{\SetFigFont{10}{12.0}{\rmdefault}{\mddefault}{\updefault}0.5}}}}
\put(1545,15){\makebox(0,0)[lb]{\smash{{\SetFigFont{12}{14.4}{\rmdefault}{\mddefault}{\updefault}z}}}}
\put(1590,2175){\makebox(0,0)[lb]{\smash{{\SetFigFont{12}{14.4}{\rmdefault}{\mddefault}{\updefault}y}}}}
\put(3255,1095){\makebox(0,0)[lb]{\smash{{\SetFigFont{12}{14.4}{\rmdefault}{\mddefault}{\updefault}x}}}}
\put(15,1140){\makebox(0,0)[lb]{\smash{{\SetFigFont{12}{14.4}{\rmdefault}{\mddefault}{\updefault}d}}}}
\put(2580,465){\makebox(0,0)[lb]{\smash{{\SetFigFont{10}{12.0}{\rmdefault}{\mddefault}{\updefault}0}}}}
\put(285,600){\makebox(0,0)[lb]{\smash{{\SetFigFont{10}{12.0}{\rmdefault}{\mddefault}{\updefault}$\frac{1 + \sqrt 5}{4}$}}}}
\end{picture}
} 
\caption{The Instance $\mathcal{I}$} \label{fig2}
\end{center}
\end{figure}

\begin{figure}[hbt]
\begin{center}
\setlength{\unitlength}{0.00052493in}
\begingroup\makeatletter\ifx\SetFigFont\undefined%
\gdef\SetFigFont#1#2#3#4#5{%
  \reset@font\fontsize{#1}{#2pt}%
  \fontfamily{#3}\fontseries{#4}\fontshape{#5}%
  \selectfont}%
\fi\endgroup%
{\renewcommand{\dashlinestretch}{30}
\begin{picture}(3393,2460)(0,-10)
\path(3169,1223)(1684,323)
\blacken\thicklines
\path(1827.440,462.553)(1684.000,323.000)(1874.087,385.585)(1827.440,462.553)
\path(3165,1230)(1680,330)
\blacken\thinlines
\path(1767.075,417.852)(1680.000,330.000)(1798.173,366.540)(1767.075,417.852)
\thicklines
\path(1702,351)(217,1251)
\blacken\path(407.087,1188.415)(217.000,1251.000)(360.440,1111.447)(407.087,1188.415)
\path(3206,1238)(1721,2138)
\blacken\path(1911.087,2075.415)(1721.000,2138.000)(1864.440,1998.447)(1911.087,2075.415)
\path(1722,2134)(237,1234)
\blacken\path(380.440,1373.553)(237.000,1234.000)(427.087,1296.585)(380.440,1373.553)
\put(465,1635){\makebox(0,0)[lb]{\smash{{\SetFigFont{10}{12.0}{\rmdefault}{\mddefault}{\updefault}0.5}}}}
\put(1545,15){\makebox(0,0)[lb]{\smash{{\SetFigFont{12}{14.4}{\rmdefault}{\mddefault}{\updefault}z}}}}
\put(1590,2175){\makebox(0,0)[lb]{\smash{{\SetFigFont{12}{14.4}{\rmdefault}{\mddefault}{\updefault}y}}}}
\put(3255,1095){\makebox(0,0)[lb]{\smash{{\SetFigFont{12}{14.4}{\rmdefault}{\mddefault}{\updefault}x}}}}
\put(15,1140){\makebox(0,0)[lb]{\smash{{\SetFigFont{12}{14.4}{\rmdefault}{\mddefault}{\updefault}d}}}}
\put(2580,1680){\makebox(0,0)[lb]{\smash{{\SetFigFont{10}{12.0}{\rmdefault}{\mddefault}{\updefault}$(\frac{1+\sqrt 5 }{2}) ^2 -\epsilon$}}}}
\put(2580,465){\makebox(0,0)[lb]{\smash{{\SetFigFont{10}{12.0}{\rmdefault}{\mddefault}{\updefault}$\frac{1 + \sqrt 5}{2}$}}}}
\put(375,465){\makebox(0,0)[lb]{\smash{{\SetFigFont{10}{12.0}{\rmdefault}{\mddefault}{\updefault}$\frac{1 + \sqrt 5}{4}$}}}}
\end{picture}
} 
\caption{The Instance $\mathcal{I}'$} \label{fig3}
\end{center}
\end{figure}

\begin{proof}
This proof is similar to the proof of
Theorem~\ref{deterministic-bound}. To prove the lower bound consider
the instances of the workload-minimization problem with $3$ source
nodes $x, y, z$ depicted in Figures $2$ and $3$. Each source node
has a single packet it wishes to send to the destination. The number
beside every directed link $(u, u')$ in these figures represents the
cost $u$ incurs for transferring a packet to $u'$. 
 Denote the instance in Figure $2$ by $\mathcal{I}$
and the instance in Figure $3$ by $\mathcal{I}'$. Observe that only the cost
function of node $x$ is different in $\mathcal{I}$ and $\mathcal{I}'$. We denote the
cost function of $x$ in $\mathcal{I}$ by $c_x$ and his cost function in
$\mathcal{I}'$ by $c'_x$.

Assume, by contradiction, that $M$ is a truthful deterministic
mechanism that obtains an approximation ratio better than
$\phi = \frac{1 + \sqrt{5}}{2}$. 
Observe, that for the instance $\mathcal{I}$ $M$ must direct the
traffic originating in node $x$ through node $y$ (otherwise this
contradicts the fact that $M$ obtains an approximation ratio better
than $\phi$). Similarly, for the instance $\mathcal{I}'$ $M$ must direct the
traffic originating in node $x$ through node $z$. However, this
violates the weak-monotonicity of $M$ as
$1+\phi =c_x((x, y))+c'_x((x, z))>c_x((x, y))+c'_x((x, z))=\phi^2 -\epsilon$ (since $\phi$ is the golden ratio).
\end{proof}

\vspace{0.1in}

\begin{theorem}
No universally-truthful randomized mechanism for minimizing the
workload in inter-domain routing can obtain an approximation ratio
better than $\frac{3+\sqrt{5}}{4}\approx 1.309$.
\end{theorem}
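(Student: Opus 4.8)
The plan is to run exactly the argument that upgrades Theorem~\ref{deterministic-bound} to Theorem~\ref{universal-bound}, but on the workload gadget of Theorem~\ref{thm-work}. By Yao's principle (Subsection~\ref{sub-universal}) it suffices, for every $\epsilon>0$, to exhibit a distribution over instances of the workload-minimization problem on which no \emph{deterministic} truthful mechanism achieves expected approximation ratio below $\frac{3+\sqrt5}{4}-O(\epsilon)$. Set $\varphi=\frac{1+\sqrt5}{2}$, so that $\varphi^2=\varphi+1$ and $\frac{1+\varphi}{2}=\frac{\varphi^2}{2}=\frac{3+\sqrt5}{4}$. I would reuse verbatim the two instances $INS$ and $INS'$ of Figures~\ref{fig2} and~\ref{fig3}: they differ only in the cost function of source node $I$, and --- because $II$ and $III$ each have a unique outgoing link --- the only two routing trees are $a$ (node $I$ sends through $II$) and $b$ (node $I$ sends through $III$). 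The calibration inside the proof of Theorem~\ref{thm-work} already records that on $INS$ the tree $a$ is workload-optimal while $b$ is worse by a factor $\varphi$, and on $INS'$ the tree $b$ is workload-optimal while $a$ is worse by a factor $\varphi-O(\epsilon)$.

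Take the distribution $D$ that picks each of $INS,INS'$ with probability $\tfrac12$, and examine an arbitrary deterministic truthful mechanism $M$ via the pair $\big(M(INS),M(INS')\big)\in\{a,b\}^2$. The key observation --- which is nothing more than the contradiction already derived in Theorem~\ref{thm-work} --- is that the configuration $M(INS)=a$, $M(INS')=b$ cannot occur: since the two instances differ only at node $I$, weak monotonicity at node $I$ (the cost form of Lemma~\ref{monotonicity-lemma}, with the inequality reversed) would force $c_I(a)+c'_I(b)\le c'_I(a)+c_I(b)$, i.e.\ $1+\varphi\le(\varphi^2-\epsilon)+0$, which is false since $1+\varphi=\varphi^2$. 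Each of the three remaining configurations gives expected ratio at least $\tfrac12(1+\varphi)-O(\epsilon)$: for $(a,a)$ the two ratios are $1$ and $\varphi-O(\epsilon)$; for $(b,b)$ they are $\varphi$ and $1$; for $(b,a)$ they are $\varphi$ and $\varphi-O(\epsilon)$ (an even larger average). Thus every deterministic truthful $M$ has expected ratio at least $\tfrac12(1+\varphi)-O(\epsilon)=\frac{3+\sqrt5}{4}-O(\epsilon)$ on $D$; letting $\epsilon\to0$ and invoking Yao's principle gives the theorem.

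I do not expect a serious obstacle here: the only thing to re-verify is the pair of metric facts about the gadget (the $\varphi$ and $\varphi-O(\epsilon)$ loss factors) together with the single weak-monotonicity violation, and all three are inherited directly from the proof of Theorem~\ref{thm-work}, whose cost labels ($1,\,0,\,\varphi,\,\varphi^2-\epsilon,\,\tfrac12,\,\tfrac{\varphi}{2}$) were chosen precisely so that the identity $1+\varphi=\varphi^2$ collapses the monotonicity inequality. The real content is the change of viewpoint: the lone contradiction extracted in Theorem~\ref{thm-work} is exactly the assertion ``at least one of the two instances is given its non-optimal routing,'' and averaging over those two instances with weight $\tfrac12$ each is all a randomized lower bound of $\tfrac12(1+\varphi)$ requires.
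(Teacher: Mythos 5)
Your proposal is correct and matches the paper's own proof essentially verbatim: same Yao-principle reduction, same uniform distribution over the two instances $INS,INS'$ from Theorem~\ref{thm-work}, same weak-monotonicity argument at node $I$ showing a deterministic truthful mechanism cannot route optimally on both, and the same $\tfrac12(1+\varphi)=\frac{3+\sqrt5}{4}$ average. The only difference is cosmetic: you spell out the four-case analysis over $(M(INS),M(INS'))\in\{a,b\}^2$, whereas the paper just records the one-sentence consequence that at least one of the two instances receives an allocation with ratio at least $\varphi$.
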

\begin{proof}
 We define $\mathcal{I}$ and $\mathcal{I}'$ as in
the proof of Theorem~\ref{thm-work}. Consider the uniform
distribution over $\mathcal{I}$ and $\mathcal{I}'$. Let $M$ be a truthful
deterministic mechanism. As shown in the proof of Theorem~\ref{thm-work}, 
$M$ cannot achieve an approximation better than
$\frac{1+\sqrt{5}}{2}$ on both $\mathcal{I}$ and $\mathcal{I}'$ due to its weak-monotonicity.
Therefore, the expected approximation of $M$ is at least
$\frac{1}{2}\times 1 + \frac{1}{2}\times \frac{1 + \sqrt{5}}{2} \approx 1.309$.
\end{proof}
\end{document}